\newcommand{\Pf}{\mathop\mathrm{Pf}\nolimits}
\newcommand{\sgn}{\mathop\mathrm{sgn}\nolimits}
\newcommand{\Pa}{\mathop\mathrm{P}\nolimits}
\newcommand{\bpow}{\mathbf{p}}
\newcommand{\bbpow}{{\bar{\mathbf{p}}}}
\newcommand{\bt}{\mathbf{t}}
\theoremstyle{plain}
\newtheorem{Theorem}{Theorem}
\newtheorem{Lemma}{Lemma}
\newtheorem{Proposition}{Proposition}
\newtheorem{Corollary}{Corollary}
\newtheorem{Remark}{Remark}
\theoremstyle{remark}
\def\l{\langle}
\def\r{\rangle}
\def\tr{\mathrm {tr}}
\def\det{\mathrm {det}}
\def\diag{\mathrm {diag}}
\def\res{\mathop{\mathrm {res}}\limits_}
\def\bp{\begin{Proposition}}
\def\ep{\end{Proposition}}
\def\bc{\begin{Corollary}}
\def\ec{\end{Corollary}}
\def\bl{\begin{Lemma}}
\def\el{\end{Lemma}}
\def\be{\begin{equation}}
\def\ee{\end{equation}}
\def\br{\begin{Remark}\rm\small}
\def\er{\end{Remark}}
\def\brs{\begin{remarks}.\\ \rm\
\begin{enumerate}}
\def\ers{\end{enumerate}\end{remarks}}
\def\bea{\begin{eqnarray}}
\def\eea{\end{eqnarray}}
\def\tr{\mathrm {tr}}
\def\det{\mathrm {det}}
\def\sgn{\mathrm {sgn}}
\def\diag{\mathrm {diag}}
\def\res{\mathop{\mathrm {res}}\limits}
\def\&{&{\hskip -20pt}}
\def\YDsize{10pt}
\def\YD#1{%
\ifnum#1=0
 \ifnum\YDcount=0 \ifx\varnothing\undefined\emptyset\else\varnothing\fi
 \else\vskip1.4pt\egroup\YDcount=0\fi
\else
 \ifnum\YDcount=0 \YDcount=1\vcenter\bgroup\vskip1pt
 \else\nointerlineskip\fi
 \vbox{\hrule\hbox{\vrule height\YDsize
 \loop\hskip\YDsize\vrule\ifnum\YDcount<#1\advance\YDcount1\repeat}\hrule
 \kern-0.4pt}\expandafter\YD
\fi}
\begin{document}
\author{ S.M.  Natanzon\thanks{National Research University Higher School of Economics, Moscow, Russia; Institute for 
Theoretical and Experimental Physics, Moscow, Russia;
Laboratory of Quantum Topology, Chelyabinsk State University, Chelyabinsk, Russia; email: natanzons@mail.ru} \and A. Yu.
Orlov\thanks{Institute of Oceanology, Nahimovskii Prospekt 36,
Moscow 117997, Russia, and National Research University Higher School of Economics,
International Laboratory of Representation
Theory and Mathematical Physics,
20 Myasnitskaya Ulitsa, Moscow 101000, Russia, email: orlovs@ocean.ru
}}
\title{Hurwitz numbers and BKP hierarchy}

\maketitle

\begin{abstract}

We consider special series in ratios of the Schur functions which are defined by integers $\textsc{f}\ge 0$ and 
$\textsc{e} \le 2$,
and also by the set of $3k$ parameters $n_i,q_i,t_i,\,i=1,\dots, k$. These series may be presented in form of matrix integrals.
In case $k=0$ these series generates
Hurwitz numbers for the $d$-fold branched covering of connected surfaces with a given Euler characteristic $\textsc{e}$ 
 and arbitrary profiles at $\textsc{f}$ ramification points. If $k>0$ they generate weighted sums of the Hurwitz numbers with 
 additional ramification points which are 
distributed between color groups indexed by $i=1,\dots,k$, the weights being written in terms of parameters $n_i,q_i,t_i$. 
By specifying the parameters we get sums of all Hurwitz numbers with $\textsc{f}$ arbitrary fixed profiles and the additional 
profiles provided the following condition: both, the sum of profile lengths and the number of ramification 
points in each color group are given numbers.  In case $\textsc{e}=\textsc{f}=1,2$ the series may be identified with 
 BKP  tau functions of Kac and van de Leur of a special type called hypergeometric tau functions. Sums of Hurwitz numbers 
 for $d$-fold branched coverings of ${\mathbb{RP}}^2$ are related to the one-component BKP hierarchy. We also present 
 links between sums of Hurwitz numbers and one-matrix model of the fat graphs.

\end{abstract}

\bigskip

\textbf{Key words:} Hurwitz numbers, tau functions, matrix integrals, BKP, multi-component KP, 
non-orientable surfaces, projective plane, Schur functions, 
hypergeometric functions, random partitions

\section{Introduction}

In the beautiful paper  \cite{Okounkov-2000}, A. Okounkov studied
ramified coverings of the Riemann sphere with arbitrary ramification
type over $0$ and $\infty$, and simple ramifications elsewhere, and it was proved that the
generating function for the related Hurwitz numbers (numbers of nonequivalent coverings with
given ramification type) is a tau -function for the Toda lattice hierarchy. In further works
\cite{GJ, AMMN-2014, Zog, Chekhov-2014, KZ} other examples of tau-functions for 2D Toda and KP hierarchy
generating Hurwitz numbers of the sphere were constructed. In recent work \cite{HO-2014} there was considered
more general TL tau function which gives some new examples of what were called composite signed Hurwitz
numbers and includes previous examples.

In the present paper we consider Hurwitz numbers of the projective plane $\mathbb{RP}^2$.
For our purpose we use the BKP hierarchy of integrable equations introduced by V.Kac and J. van de Leur in
\cite{KvdLbispec}.
We also present the most general weighted combinations of Hurwitz numbers for the sphere $\mathbb{CP}^1$ which
may be related to the two-component KP hierarchy which generalizes results of \cite{HO-2014}.
(A brief explanation what happens when we change the hierarchy is given in the next paragraph.)
Our main result is that the BKP tau function (\ref{BKP-generating-Hurwitz-sums}) is the generating function for
the certain linear combinations of Hurwitz numbers of the projective plane, 
see Theorem \ref{Th-BKP} and relations  (\ref{T-BKP}), (\ref{special'''''}). 

\vspace{2ex}

For readers familiar with the topic, let us briefly explain the difference between TL and BKP tau functions
in the context of generating of Hurwitz numbers and topology of the base.

The Frobenius formula for the Hurwitz numbers enumerating $d$-fold branched coverings of Riemann or Klein
surfaces contains the sum over irreducible representations $\lambda$ of the symmetric group
(see  \cite{F,FS,M1,M2,ZL,GARETH.A.JONES})
\be\label{Hurwitz-counting}
H_\Omega(\Delta^{(1)}\,\dots,\Delta^{(\textsc{f})})\,=\,d!
\sum_{\chi} \,\,\left(
\prod_{i=1}^\textsc{f} \,\,|C_{\Delta^{(i)}}|\,\,\frac{\chi(\Delta^{(i)})}{\chi(1)} \right)\,\,
\left(\frac{\chi(1)}{d!}\right)^\textsc{e},
\ee
where $\textsc{e}$ is the Euler characteristic of $\Omega$, $\{\Delta^{(i)}\}$,
$\Delta^{(i)}$ are profiles  over ramification points on $\Omega$, $\chi(\Delta)$ is a character of the symmetric group $S_d$ evaluated at a cycle type $\Delta$, and $\chi$ ranges over the irreducible complex characters of $S_d$. Each profile $\Delta^{(i)}$ is a partition of $d$ - the set of non-negative non-increasing numbers
$(d^{(i)}_1,d^{(i)}_2,\dots )$, which describes the ramification over the point number $i$ on the base. The weights
of all partitions involved in (\ref{Hurwitz-counting}) are equal: $|\Delta^{(i)}|=d$. The number $|C_{\Delta}|$ is the number of elements in the cycle class $\Delta$ in $S_d$.

The formula (\ref{Hurwitz-counting}) was derived for connected $\Omega$ which in particular implies that
$\textsc{e}\le 2$.
The geometrical meaning of the formula (\ref{Hurwitz-counting}) in  case $\textsc{e}>2$  is unclear.
We shall use the notation $H^\textsc{e}$ instead of $H_\Omega$ to denote the left hand side of  (\ref{Hurwitz-counting})
where we allow $\textsc{e}$ to be any integer.

The Hurwitz numbers form a topological field theory \cite{Dijkgraaf}. They are used in mathematical physics
(for instance in \cite{Dijkgraaf}) and in algebraic geometry \cite{ZL}.

Here and below we write $\chi_\lambda(1)$ having in mind the evaluation of the irreducible character
of the symmetric group $\chi_\lambda$ at the unity element in the symmetric group, which is given by the
partition $(1^{d})$, then, $\chi_\lambda(1)={\rm dim} \lambda$.

It is well-known that Schur functions $s_\lambda$ and characters of the symmetric group $\chi_\lambda$
are linearly dependent \cite{Mac}.
Soliton theory provides various series of products of the Schur
functions over partitions for tau functions of various hierarchies of integrable equations. In \cite{Okounkov-2000} Okounkov introduced and studied the following sum
\be\label{Okounkov-tau}
 \sum_\lambda e^{\beta f(\lambda,n)}
 \,s_\lambda(\bpow)s_\lambda(\bbpow)
\ee
 with
 \be
 \label{Okounkov-f}
 f(\lambda,n)=n|\lambda| +  |C_{\Gamma}|\,\frac{\chi_\lambda(\Gamma)}{\chi_\lambda(1)}
\ee
where both, the partition $\Gamma=\Gamma_d:=(1^{d-2}2)$ and the number $|C_{\Gamma}|$ depends only on $d$, 
see (\ref{cardinality}) below. He shown that series (\ref{Okounkov-tau}) is a
tau function of the Toda lattice where power sum variables $\bpow=(p_1,p_2,\dots)$ and
$\bbpow=({\bar p}_1,{\bar p}_1,\dots)$ together with the integer $n$ play the role of higher times.

Moreover, (\ref{Okounkov-tau}) generates a certain class of the Hurwitz numbers (\ref{Hurwitz-counting}).
This class
describes coverings of the Riemann sphere ($\textsc{e}=2$) with arbitrary given profiles $\bpow$, $\bbpow$ over two given
points (say, $0$ and $\infty$) and any number of simple ramifications described by the Young diagram $\Gamma$
(here and below we shall omit the dependence of $\Gamma$ on $d$).
One can recognize it thanks to the the relation between the Schur functions and the characters of the symmetric
group \cite{Mac} which we re-write in a suitable form as follows:
\be\label{char-map-introduction}
s_\lambda(\bpow)= \, p_1^d
\left(1+
\, p_1^{-d}\sum_{\Delta\neq 1^d}\,\,|C_{\Delta}|\,\,
\frac{\chi_\lambda(\Delta)}{\chi_\lambda(1)}\,\,\bpow_\Delta \,
\right)\, \frac{\chi_\lambda(1)}{d!}
\ee
where the summation ranges over all partitions $\Delta=(d_1,d_2, \dots)$  of the number $d=|\lambda|$, and where
$\bpow_\Delta$  is the product $p_{d_1}p_{d_2}\cdots $. The numbers $|C_\Delta|$ depend only on $\Delta$ (see \cite{Mac} or
(\ref{cardinality}) below).

Now, it is clear from (\ref{char-map-introduction}) that the formula (\ref{Okounkov-tau}) is a generating function
for Hurwitz numbers (\ref{Hurwitz-counting}) where
 $\beta$, $\bpow$ and $\bbpow$ are formal parameters. Basically,
the Taylor coefficients in the terms $p_\Delta{\bar p}_{\Delta'}\beta^b$, up to a factor, coincide with the
number of covers with the ramification type $\Delta,\Delta'$ over two points and further of type
$\Gamma=1^{|\Delta|-2}2$ over $b$ points.

\vspace{2ex}

We present a larger class of series of type (\ref{Okounkov-tau}), which we shall call generating Hurwitz series:
\be\label{hyp-tau-e-f}
\tau^{(\textsc{e},\textsc{f})}\left(N,n,\bpow^{(1)},\dots,\bpow^{(\textsc{f})}|\{q_i,t_i,n_i \},\beta\right)=
\sum_{\lambda \atop \ell(\lambda)\le N} \,\,r_\lambda^{{\bf q},{\bf t},{\bf n},\beta}(n)\,
\left(\chi_\lambda(1) \right)^{\textsc{e}-\textsc{f}}
\prod_{j=1}^\textsc{f} s_\lambda\left(\bpow^{(j)}\right)\,,\quad \textsc{e} \in \mathbb{Z}
\ee
where $\bpow^{(j)}=(p_1^{(j)},p_2^{(j)},\dots)$ are power sums variables, $a_i,q_i,n_i\in\mathbb{C}$, and
\be\label{r}
r_\lambda^{{\bf q},{\bf t},{\bf n},\beta}(n)=\,e^{\beta f(\lambda,n)}\,
\prod_{i=1}^k \left(\frac{s_\lambda(\bpow(q_it_i^n,t_i))}{s_\lambda(\bpow(0,t_i))}\right)^{n_i}
\ee
while power sum variables $\bpow(q_i,t_i)=\left(p_1(q_i,t_i),p_2(q_i,t_i),\dots  \right)$ are specified as follows
\be\label{p-m-a-i-q-i}
p_m(q_i,t_i):=\frac{1-q_i^{ m}}{1-t_i^m},\quad
p_m(0,t_i):=\frac{1}{1-t_i^m}
\ee

On the relation of these $\bpow(q_i,t_i)$ to Macdonald polynomials see the Appendix \ref{MacdPol}.
We have
\be\label{ratio-of-s-s=ratio-of-chi}
\left(\frac{s_\lambda(\bpow(q_i,t_i))}{s_\lambda(\bpow(0,t_i))}\right)^{n_i}=\left(1-q_i\right)^{n_i d}
\left(\frac{1+
\, \sum_{\Delta\neq 1^d}\,\,|C_{\Delta}|\,\,
\frac{\chi_\lambda(\Delta)}{\chi_\lambda(1)}\,w_\Delta(q_i,t_i) \,}{1+
\, \sum_{\Delta\neq 1^d}\,\,|C_{\Delta}|\,\,
\frac{\chi_\lambda(\Delta)}{\chi_\lambda(1)}\,w_\Delta(0,t_i)}\right)^{n_i}
\ee
where
\be\label{w-introduction}
w_\Delta(q_i,t_i)=\left(\frac{1-t_i}{1-q_i}\right)^{d}
\prod_{j=1}^{\ell} \frac{1-q_i^{d_j}}{1-t_i^{d_j}}\,,\qquad
\Delta=(d_1,\dots,d_\ell)
\ee

For the sake of brevity we shall also write $\tau_r^{(\textsc{e},\textsc{f})}$ instead of
$\tau^{(\textsc{e},\textsc{f})}\left(N,n,\bpow^{(1)},\dots,\bpow^{(\textsc{f})}|\{q_i,t_i,n_i \},\beta\right)$.

The series (\ref{hyp-tau-e-f}) are interesting because of three aspects.

First, for $\textsc{f}=\textsc{e}=1,2$ they may be related to the $\textsc{e}$-component BKP hierarchy of integrable equations.
For $\textsc{e}=2$, they also may be related to the 2-component KP hierarchy (2KP) and Toda lattice hierarchy. 
The cases $\textsc{e}=1$
(one-component BKP) and $\textsc{e}=2$
(2-component KP, and also 2-component BKP) describes specializations
of tau functions of hypergeometric type studied respectively in \cite{OST-I} and in \cite{KMMM}, \cite{OS-2000}
(in case $n_i=\pm 1$, such specialization was introduced in \cite{OS-2000}, \cite{OS-TMP} in the context of hypergeometric
functions of matrix argument \cite{GRich} and Milne's hypergeometric functions \cite{Milne}.

Second, by (\ref{char-map-introduction}) series (\ref{hyp-tau-e-f}) may be viewed as a generating function
for certain weighted combinations of the right hand sides of eq. (\ref{Hurwitz-counting}), $H^\textsc{e}$.
To see it, again, we use (\ref{char-map-introduction}) replacing each Schur function by its character
expansion. One can see that thanks to (\ref{char-map-introduction}) we obtain linear combinations
of terms $H^\textsc{e}$. We notice that the factor $r_\lambda$
in (\ref{hyp-tau-e-f}) contains only ratios of the Schur functions and due to (\ref{char-map-introduction})
does not contribute to the power of $\chi_\lambda(1)$ in (\ref{Hurwitz-counting}), this power is the result
of the multiplication of the $\textsc{e}$ Schur functions in the left hand side of (\ref{hyp-tau-e-f}).

At last, series (\ref{hyp-tau-e-f}) may be obtained as integrals of 2KP and BKP tau functions over matrices.

Let us note that series (\ref{hyp-tau-e-f}) where $\textsc{e}=2$ and $n_i=1,\,q_i=t^{a_i}_i\to 1,\,i=1,\dots,k$
we introduces in \cite{AMMN-2014}, and the case $\textsc{e}=2$ and $n_i=\pm,\,q_i=t^{a_i}_i\to 1,\,i=1,\dots,k$
was studied in \cite{HO-2014}.

The generating Hurwitz series labeled by $\textsc{e}-1,\textsc{f}-1$ may be obtained from the series labeled by
$\textsc{e},\textsc{f}$ as follows
\be
\tau^{(\textsc{e}-1,\textsc{f}-1)}\left(N,n,\bpow^{(1)},\dots,\bpow^{(\textsc{e}-1)}\right)=
\left[ e^{L^{(\textsc{e})}}\cdot\tau^{(\textsc{e},\textsc{f})}\left(N,n,\bpow^{(1)},\dots,
\bpow^{(\textsc{e})}\right)\right]_{\bpow^{(\textsc{e})}=0}
\ee
where $L^{(\textsc{e})}$ is the Laplace operator
$\sum_{m>0} \left( \frac m2 \frac{\partial^2}{\partial p_m^2} + \frac{\partial}{\partial p_{2m-1}} \right)$,
where each $p_m$ is $p_m^{(\textsc{e})}$,
see Section \ref{transformations-section}.

Now for $\textsc{e}=\textsc{f}=2$ we have  the following series 
\be\label{general-hyp-tau-TL}
\tau^{2KP}(n,\bpow,\bbpow)=\sum_\lambda \,\,r_\lambda(n)\,
s_\lambda(\bpow)s_\lambda(\bbpow)
\ee
which may be identified with TL or, the same, 2KP tau function, see Section \ref{BKP-section} below.

 As a result one may conclude that, similar to the Okounkov tau function, the TL tau function (\ref{general-hyp-tau-TL}) generates
linear combinations of number of covering of the sphere. These specific combinations of Hurwitz numbers
(``composite signed Hurwitz numbers'')
in case $q_i\to 1$ are written down in \cite{HO-2014}
and for our convenience are also reproduced in the present text.

\vspace{2ex}

Put $\textsc{e}=\textsc{f}=1$. The series
\be
\label{general-hyp-tau-BKP}
\tau^{BKP}(n,\bpow)=\sum_\lambda \,\,r_\lambda(n)\,s_\lambda(\bpow)
\ee
with the same $r_\lambda(n)$ is also a tau function \cite{OST-I} where the set $\bpow$ plays the role of higher times ,
but now it is a different hierarchy, namely, the BKP hierarchy introduced by Kac and van de Leur \cite{KvdLbispec}.
This case will be considered in the present paper. It is easy to see that now $\textsc{e}=1$. Thus, according to
Frobenius formula (\ref{Hurwitz-counting}), function (\ref{general-hyp-tau-BKP}) is a generating function for Hurwitz
numbers of the projective plane $\mathbb{RP}^2$.

If we consider the further case $\textsc{e}=0$ (the coverings of the elliptic curve) with the help of the series
\be
\label{trace}
\tau=\sum_{n,\lambda} \,\,r_\lambda(n)
\ee
then we see it is not a tau function as there are no time variables here. This expression may be related to the trace
of the certain diagonal ${\hat GL}_\infty$ element in the fermionic Fock space.

Thus, on the formal level we can explain the appearance of different hierarchies of integrable equations in the description
of Hurwitz counting problem.

\vspace{2ex}

This paper is the detalization of the consideration above.
We shall study only the case all $q_i\to 1$. The paper is organized as follows. In Section  \ref{Hurwitz-numbers-section} we
describe Hurwitz numbers and its combinations generating solutions of integrable systems. In Section \ref{BKP-section} we
recall some facts about BKP and TL hierarchies. We need a special class of tau functions which we call hypergeometric tau
functions.
 In Section \ref{Particular cases} in particular we review TL tau functions generating composite signed Hurwitz numbers
according to \cite{HO-2014}. However we need a modification caused by semiinfinity of TL which we need to compare results
with the BKP case later in Section \ref{transformations-section}. In sections \ref{Hurwitz-BKP-section} and
\ref{transformations-section} we construct Hurwitz $\tau$-functions for BKP hierarchy and find its connection with Hurwitz
$\tau$-functions for the semiinfinte 2DToda hierarchy.

In Section \ref{Matrix-integrals} we present a relation between fat graph counting obtained from the one-matrix
models and sums of Hurwitz numbers, and also write down an analogue of the one-matrix model which generates
similar sums for Hurwitz numbers of the projective plane.
We show ways to get Hurwitz generating series (\ref{hyp-tau-e-f})
in form of matrix integrals. These are integrals of the simplest 2KP and BKP functions whose as functions of products of matrices.
One of the way to diminish the Euler $\textsc{e}$ of the generating series by 1 is to replace one of 2KP tau functions
in the integrand by a BKP tau function.

In the Appendix \ref{Matrix-integrals-appendix} we discuss Hurwitz generating series in the context of matrix integrals.
In the Appendix \ref{fermionic-appendix} we write down the fermionic expression of the Hurwitz generating series (\ref{hyp-tau-e-f})
where $\textsc{e=f}>2$.

\section{Weighted sums of Hurwitz numbers \label{Hurwitz-numbers-section}}

\subsection{Hurwitz numbers}

For a partition $\Delta$ of a number $d=|\Delta|$ denote by $\ell(\Delta)$ the number of the non-vanishing parts.
For the Young diagram, corresponding to $\Delta$, the number $|\Delta|$ is the weight  of the diagram  and $\ell(\Delta)$
is the number of rows. Denote by $(d_1,\dots,d_{\ell})$ the Young diagram with rows of length $d_1,\dots,d_{\ell}$ and
corresponding partition of $\sum d_i$.

Hurwitz number $H_\Omega(d,\Delta^{(1)},\dots,\Delta^{(\textsc{f})})$ is defined by a connected surface $\Omega$ and partitions
$\Delta^{(1)},\dots,\Delta^{(\textsc{f})}$ of the number $d=|\Delta^{(i)}|,\,i=1,\dots,\textsc{f}$. The Hurwitz number
$H_\Omega(d,\Delta^{(1)},\dots,\Delta^{(\textsc{f})})$ is the
weighted number of branched coverings of the surface $\Omega$ by other surfaces (connected or non-connected) with
fixed critical values $z_1,\dots,z_\textsc{f}\in\Omega$ of topological types $\Delta^{(1)},\dots,\Delta^{(\textsc{f})}$.
More precisely,
$z\in\Omega$ is the critical value of the branched covering $f:\Sigma\rightarrow\Omega$ if $z=f(p)$, where $p\in\Sigma$ is
a critical point of $f$. Consider degrees $d_1,\dots,d_l$ of $f$ in all preimiges $f^{-1}(z)$. The partition  $(d_1,\dots,d_\ell)$
of $d=\deg(f)$ is called the topological type of the critical value $z$. We say that branched coverings
$f':\Sigma'\rightarrow\Omega$ and $f'':\Sigma''\rightarrow\Omega$ are the same, if there exists a homeomorphism
$g:f'\rightarrow f''$ such that $f'=f''g$. Then
\be H_\Omega(d,\Delta^{(1)},\dots,\Delta^{(\textsc{f})})=\sum\frac{1} {|\texttt{Aut}(f)|}\quad,
\ee
where the sum is taken over all branched coverings $f$ of $\Omega$, with the critical values $z_1,\dots,z_\textsc{f}\in\Omega$ of the
topological types $\Delta^{(1)},\dots,\Delta^{(\textsc{f})}$ respectively. This number is independent of the positions of the
branching points $z_i$.

The Hurwitz numbers arise in different fields of mathematics: from algebraic geometry to integrable systems. They are well studied
for orientable $\Omega$. In this case the Hurwitz number coincides with the weighted number of holomorphic branched coverings of a
Riemann surface $\Omega$ by another Riemann surfaces, having critical values $z_1,\dots,z_\textsc{f}\in\Omega$ of topological types
$\Delta^{(1)},\dots,\Delta^{(\textsc{f})}$ respectively. The well known isomorphism between Riemann surfaces and complex a
lgebraic curves gives the interpretation of the Hurwitz numbers as the numbers of morphisms of complex algebraic curves.

In this work we consider the Hurwitz numbers for non-orientable $\Omega$ without boundary. They have also two other
interpretations: as the numbers of the branched coverings of a Klein surface without boundary by another Klein surface, and as the
number of morphisms of real algebraic curves without real points. Klein surfaces are factors of Riemann surfaces by antiholomorphic
involutions. They correspond to real algebraic curves. Real points of real curves correspond to fixed points of the involutions
and boundary points of the Klein surfaces (see \cite{AG,N90,N2004}). In this paper we consider only surfaces without boundaries.
But an analog of the Hurwitz numbers for surfaces with boundaries also exists (\cite{AN,N}).

The Hurwitz numbers are closely connected with irreducible representations of $S_d$. The action of any permutation $s\in S_d$
split the set ${1,\dots,d}$ on subsets cardinality $(d_1,\dots,d_\ell)$ and thus generate a partition
$\Delta(s)=(d_1,\dots,d_\ell)$ of $d$. This partition is called as cyclic type of $s$. Conversely, any partition $\Delta$ of $d$ generate the set $C_{\Delta}\subset S_d$, consisted of permutation of cyclic type $\Delta$. The cardinality of $C_\Delta$ is equal to
\be\label{cardinality}
|C_\Delta| \,=\,\frac{|\Delta|!}{z_\Delta}\,,\qquad
z_\Delta\,=\,\prod_{i=1}^\infty \,i^{m_i}\,m_i!
\ee
where $m_i$ denotes the number of parts equal to $i$ of the partition $\Delta$ (then a partition $\Delta$ is often
denoted by $1^{m_1}2^{m_2}\cdots$).

Moreover, if $s_1,s_2\in C_{\Delta}$, then $\chi(s_1)=\chi(s_2)$ for any complex characters $\chi$ of $S_d$. Thus we can define $\chi(\Delta)$ for a partition $\Delta$, as $\chi(\Delta)=\chi(s)$ for $s\in C_\Delta$.

The Frobenius formula \cite{F,FS,M1,M2,ZL,GARETH.A.JONES} says that
\be \label{Hurwitz}
H_\Omega(d,\Delta^{(1)},\dots,\Delta^{(\textsc{f})})= \,d!
\sum_{\chi} \,\,\left(\prod_{i=1}^\textsc{f} \,\,|C_{\Delta^{(i)}}|\,\,\frac{\chi(\Delta^{(i)})}
{\chi(1)} \right)\,\,
\left(\frac{\chi(1)}{d!}\right)^{\textsc{e}},
\ee
where $\textsc{e}$ is the Euler characteristic of $\Omega$ and $\chi$ ranges over the irreducible complex characters of $S_d$,
associated with Young diagrams of wight $d$.

\vspace{2ex}

In what follows we shall construct the generating series for the numbers
\be\label{generating}
H^{\textsc{e},\textsc{f}}_{d,N}\left(\Delta^{(1)}\,\dots, \Delta^{(\textsc{f})} \right)\,:=\,d!
\sum_{\lambda \atop |\lambda|=d \,,\, \ell(\lambda)\le N} \,\,\left(
\prod_{i=1}^\textsc{f} \,\,|C_{\Delta^{(i)}}|\,\,\frac{\chi_\lambda(\Delta^{(i)})}{\chi_\lambda(1)} \right)\,\,
\left(\frac{\chi_\lambda(1)}{d!}\right)^{\textsc{e}},
\ee
depending on free integer parameters $\textsc{e},\textsc{f},d,N$. For $N \ge d$ the number $H^{\textsc{e},\textsc{e},}_{d,N}$
does not depend on $N$. In particular for $N\ge d$ we have
$H^{2,\textsc{f}}_{d,N}\left(\Delta^{(1)}\,\dots,\Delta^{(\textsc{f})}\right)
=H_{\mathbb{CP}^1}\left(d,\Delta^{(1)}\,\dots,\Delta^{(\textsc{f})}\right)$ and $
H^{1,\textsc{f}}_{d,N}\left(\Delta^{(1)}\,\dots,\Delta^{(\textsc{f})}\right)
=H_{\mathbb{RP}^2}\left(d,\Delta^{(1)}\,\dots,\Delta^{(\textsc{f})}\right)$.

\paragraph{Suitable notations: occupation numbers.}

Given $d$ we have a finite number, say denoted by $d^*+1$,
of all different partitions of $d$, $\Delta_0,\dots, \Delta_{d^*}$ (the value of $d^*=d^*(d)$ is unimportant for us).
For a given $d$ it is convenient to enumerate partitions by a label, say, $j=0,1,\dots,d^*$ where we reserve $j=0$ for
the partition $1^d$: $\Delta_0=1^d$. By $c_j \ge 0$ denote how many times a
partition $\Delta_j$ occurs in the set of arguments of $H^\textsc{e}_{d,N}$ and write
\[
H_\Omega ( d;
\underbrace{\Delta_0,\dots,\Delta_0 }_{c_0},\dots,
\underbrace{
\Delta_{d^*},\dots,\Delta_{d^*}
}_{c_{d^*}} )\,=:\,
H_\Omega (d;{\bf c})
\]
where ${\bf c}={\bf c}(d)=(c_0,\dots,c_{d^*-1})$.
By analogue with particle and statistical physics we will call $c_j$ the occupation number of the state $j$
(similar notations were used in \cite{MM2}).
As one concludes from (\ref{Hurwitz}) the Hurwitz numbers do not depend on $c_0$, and it is obvious from the
geometrical point of view since it is related to the absence of branching.

We shall also use the mixed notations, where we split  the arguments of $H_\Omega$ into two groups:
\[
H_\Omega\left(d;{\bf c},\Delta^{(1)},\dots, \Delta^{(\textsc{f})}\right):=\,
H_\Omega\left(d;\underbrace{\Delta_0,\dots,\Delta_0}_{c_0},\dots,
\underbrace{\Delta_{d^*},\dots,\Delta_{d^*}}_{c_{d^*}},\Delta^{(1)},\dots, \Delta^{(\textsc{f})}\right)
\]

We shall consider weighted sums of Hurwitz numbers over the occupation numbers keeping a group of
$\textsc{f}$ partitions fixed. The series (\ref{hyp-tau-e-f}) yields examples of such sums,
which have the following form 
\[
\tau^{(\textsc{e},\textsc{f})}\left(N,n,\bpow^{(1)},\dots,\bpow^{(\textsc{f})}|\{q_i,t_i,n_i \},\beta\right)
:= \sum_{d}\,\sum_{\bf c}\,\omega\left(d,{\bf c}\right)\,
H^{\textsc{e},\textsc{f}+|{\bf c}|}_{d,N}({\bf c},\Delta^{(1)},\dots, \Delta^{(\textsc{f})})
\prod_{i=1}^{\textsc{f}}\bpow^{(i)}_{\Delta^{(i)}}
\]
where the weights $\omega$ depend on $n,\bpow^{(j)},\beta$ and on $q_i,t_i,n_i,\,i=1,\dots,k$, 
and will be found in the next section.

\subsection{Weighted sums of Hurwitz numbers. \label{Weighted-sums-Section}}

Our goal is to explain what kind of information related to Hurwitz number 
is hidden in series (\ref{hyp-tau-e-f}) depending on $3k$ parameters $q_i,t_i,n_i$.

\paragraph{The case $k=0$, $b=0$.} First of all we notice that if the factor $r_\lambda$ is equal to 1, the series
\[
 \tau^{\textsc{e,f}}_1(N,\bpow^{(1)},\dots, \bpow^{(\textsc{f})})\,=\,\sum_{\lambda\in\Pa\atop \ell(\lambda)\le N}\,
 \left( s_\lambda(\bpow_\infty)\right)^\textsc{e}
 \, \prod_{i=1}^\textsc{f}\,\frac{s_\lambda(\bpow^{(i)}))}{s_\lambda(\bpow_\infty)}
\]
generate Hurwitz numbers themselves. Here we use the notation $\bpow_\infty=(1,0,0,\dots)$, then, from 
(\ref{char-map-introduction}) we obtain the known relation \cite{Mac}
\be\label{dim-lambda-Schur}
  s_\lambda(\bpow_\infty)\, =\,\frac{{\rm dim}\,\lambda}{|\lambda|!}\,=\, \frac{\chi^\lambda_{(1^{|\lambda|})}}{|\lambda|!}
\ee
where ${\rm dim} \lambda$ is the dimension of the irreducible representation $\lambda$ of the symmetric group 
$S_d$, $d=|\lambda|$.

Indeed, thanks to (\ref{char-map-introduction}) we easily obtain
\be
 \tau^{\textsc{e},\textsc{f}}_1(\bpow^{(1)},\dots, \bpow^{(\textsc{f})})\,=\,
 \sum_{\Delta^{(1)},\dots,\Delta^{(\textsc{f})}}\,
 H^{\textsc{e},\textsc{f}}_{d,N}\left(d;\Delta^{(1)}\,\dots,\Delta^{(\textsc{f})}\right) \,
 \prod_{i=1}^\textsc{f}\,\bpow^{(i)}_{\Delta^{(i)}}
\ee
where $H^{\textsc{e},\textsc{f}}_{d,N}$ is given by (\ref{generating}).

\paragraph{The case $k,b>0$.} In case the prefactor $r_\lambda$ is not identical to $1$ the series (\ref{hyp-tau-e-f})
generates not the Hurwitz numbers but certain linear combinations (weighted sums) of these numbers.
This is the subject of Proposition \ref{tau=Hurwitz-sums} below. The contribution of the exponential prefactor
in (\ref{r}) is already explained in \cite{Okounkov-2000} and we concentrate on the contribution of the ratio of the Schur 
functions in (\ref{r}).

Let us note that we need some preliminary work to explain what sort of weighted sums we are going to obtain.
These are weighted sums over additional partitions. Each additional partition belongs to one of $2k$ colored group.
The number of the partitions in each group is not fixed, the weight is defined by the partition and by the color $i$,
where $i=1,\dots,2k$, of the partition, namely, the weight depends on the values of the complex parameters $q_i,t_i,n_i$.
 
We need additional notations for the additional partitions. \footnote{The notations $\Gamma=1^{d-2}2$ and 
$\Delta^{(i)},\,i=1,\dots,\textsc{f}$ we will keep for fixed partitions, while for additional ones we shall 
use subscripts and superscripts without brackets like $\Delta^i_j$ below.}
First of all we notice that the additional partitions are profiles of the branch points and all profiles
has the same profile weight (the weight of the related partition) equal to $d$.
Denote the set of all partitions of the weight $d$
by $\Pa_d$. We recall: $\{ \Delta_j\in\Pa_d\,, \,j=0,\dots,d^*\}$ is the complete set of different partitions of the 
weight $d$, and $\Delta_0=1^d$.

Let us consider a set $D\in\Pa^{\times 2k}_d$  of partitions of the same
weight $d$ separated into $k$ pairs of (color) groups. Color groups are numbered by $k$ pairs of indexes 
$1,{\bar 1},\dots,k,{\bar k}$.
Then each partition from this set may be indexed by a superscript 
which numbers the color of he partition and by a subscript which indexes its number among all partitions of a given weight $d$. 
Say $\Delta^i_j$ ($\Delta^{\bar i}_j$) is the partition numbered by $j$ in the set of all partitions of $d$ from the color 
group $i$ ($\bar i$). 
Then the occupation numbers
$c^i_j$ (and ${\bar c}^i_j$) where $i=1,\dots,k$, $j=0,\dots,d^*-1$ says how many times the profile $\Delta_j$ occurs 
in the color group $i$ (respectively ${\bar i}$).
Thus $D=D\left({\bf c}^1,{\bf {\bar c}}^1,\dots,{\bf c}^k,{\bf {\bar c}}^k\right)$ is defined by the $2k$ sets of occupation
numbers: 
by ${\bf c, {\bar c}}^i=(c^i_0,{\bar c}^i_0,\dots,c^i_{d^*-1},{\bar c}^i_{d^*-1})$, 
$i=1,\dots,k$.

We introduce the notation $|{\bf c}^i|=\sum_{j=1}^{d^*-1}\,c^i_j$ (and  $|{\bf {\bar c}}^i|=\sum_{j=1}^{d^*-1}\,c^i_j$)
which says how many partitions different from $\Delta_0$ is contained in the color group $i$ (resp. ${\bar i}$) of
$D\left({\bf c}^1,{\bf {\bar c}}^1,\dots,{\bf c}^k,{\bf {\bar c}}^k\right)$. 
The set $D$ we call the set of additional partitions.
And 
${\bf C}:=\sum_{i=1}^k\,\left(|{\bf c}^i|+|{\bf {\bar c}}^i|\right)$ denotes the total number of additional profiles
(where we exclude all profiles equal to $\Delta_0=1^d$).

We consider $H^{\textsc{e},\textsc{f}}_{d,N}$ (which are Hurwitz numbers if $N\ge d$, see (\ref{generating})) as functions 
of the set of partitions $\Delta^{(1)},\dots,\Delta^{(\textsc{f})} $, also the set of $b$ partitions $\Gamma=1^{d-2}2$
and of the set of the additional partitions $D\left({\bf c}^1,{\bf {\bar c}}^1,\dots,{\bf c}^k,{\bf {\bar c}}^k\right)$. 
Using the mixed notation we write
\[
 H^{\textsc{e},\textsc{f}+b+{\bf C}}_{d,N}
\left({\bf c}^1,{\bf {\bar c}}^1,\dots,{\bf c}^k,{\bf {\bar c}}^k,\Gamma_1,\dots,\Gamma_b,
\Delta^{(1)}\,\dots,\Delta^{(\textsc{f})}\right)
\]

We want to consider weighted sums of such Hurwitz numbers over occupation numbers of the additional partitions
${\bf c}^1,{\bf {\bar c}}^1,\dots,{\bf c}^k,{\bf {\bar c}}^k$.

Next we write down the corresponding weights.

{\bf Weight functions.}  We recall that we introduced partitions $\Delta^i_j,\Delta^{\bar i}_j$ and ${\bar i}=1,\dots,k$
which occur $c^i_j$ (respectively ${\bar c}^i_j$) times in the argument of $H^{\textsc{e},\textsc{f}+b+{\bf C}}_{d,N}$ 
(and $H^{\textsc{e},\textsc{f}+b+{\bf C}}_{d,N}$ are the Hurwitz numbers in case $N\ge d$). 

In what follows we shall need the set $i=1,\dots,k$ of the following functions of 3 parameters 
$n_i,q_i,t_i$.
\bea\label{W-i}
W_i({\bf c}^i): &=& W\left(n_i,q_i,t_i,{\bf c}^i\right)\,=\,(-1)^{|{\bf c}^i|}\,(-n_i)_{|{\bf c}^i|}\,
\prod_{j=1}^{d^*}\,\frac{\left(w_{\Delta^i_j}(q_i,t_i)\right)^{c^i_j}}{c^i_j!}
\\  
\label{W-i-*}
W_i^*({\bf {\bar c}}^i): &=& W\left(-n_i,0,t_i,{\bf{\bar c}}^i\right)\,=\,
(-1)^{|{\bf {\bar c}}^i|}\,(n_i)_{|{\bf {\bar c}}^i|}\,
\prod_{j=1}^{d^*}\,\frac{\left({w}_{\Delta^{\bar i}_j}(0,t_i) \right)^{{\bar c}^i_j}}{{\bar c}^i_j!}
\eea
where  $|{\bf c}^i|:=c^i_1+\dots +c^i_{d^*}$, 
$|{\bf {\bar c}}^i|:={\bar c}^i_1+\dots +{\bar c}^i_{d^*}$, and the notation $(n)_m:=n(n+1)\cdots (n+m-1)$ serves for 
the Pochhammer symbol. 
Functions $w_{\Delta^i_j}$ are defined as follows
\be
w_{\Delta^i_j}(q_i,t_i)=\,\left(\frac{1-t_i}{1-q_i}\right)^{d}\,
\prod_{s=1}^{\ell(\Delta^i_j)}\frac{1-q_i^{d^{ij}_s }}{1-t^{d^{ij}_s}}\,,\qquad
q_i,t_i\in\mathbb{C}
\ee
where $d^{ij}_1,\dots,d^{ij}_{\ell}$ are parts of the partition $\Delta^i_j$ and $\sum_{s=1}^{\ell} d^{ij}_s =d$.
\br\label{q=t}
As we see $w_\lambda(t,q)=w_\lambda(q,t)^{-1}$, in particular $w_\lambda(t,t)=1$.   
Also $w_{(1^d)}(q,t)=1$ for each $q,t$.
\er
\br
$w_\lambda(q^{-1},t)=w_\lambda(q,t^{-1})=(-1)^{d-\ell(\lambda)} w_\lambda(q,t)$
\er
\br\label{q=0,t=1}
As we see $w_\lambda(q,1)=\delta_{\lambda,1^d}$ for $q\neq 1$. It means that $W^*_i=\delta_{|{\bf{\bar c}}^i|,0}$ in case 
$q\neq 1,t=1$.
\er
\br
The multiplier $\frac{1}{c^i_j!}$ in (\ref{W-i}) corresponds to the permutation between identical profiles labeled by the 
partition $\Delta^i_j$.
\er

\vspace{2ex}

\paragraph{Weighted sums.}

We shall consider linear combinations of Hurwitz numbers related to different sets of partitions. 
To describe these sets we fix the weight of partitions, say, $d$. Then  we have a finite number of all different
partitions of this weight, we denote this number by $1+d^*$. Let us enumerate partitions from this set by
a subscript: $\Delta_j,\,\,j=1,\dots,d^*$. In what follows 
we need colored groups of partitions,
the color will be labeled by a superscript. Thus we have the set $\Delta^i_j$, $i=1,\dots,k$, $j=1,\dots,d^*$,
provided $\Delta^i_1$, $i=1,\dots,k$.
In our notation of functions defined on such sets, say, the Hurwitz number $H_\Omega$,
we shall write $H_\Omega(\{c^i_j),i=1,\dots,k,\, j=1,\dots,d^*\}$ instead of 
$H_\Omega\left(\{\Delta^i_j),i=1,\dots,k,\, j=1,\dots,d^*\}\right)$
replacing each partition $\Delta^i_j$ by the {\em occupation number} $c^i_j\ge 0$ which shows how many times 
the partition $\Delta^i_j$ occurs to be the argument of the function.
For us it is important to keep in mind the colored groups, and we shall use the following notation:
${\bf\Delta}^i=\Delta^i_1,\dots,\Delta^i_{d^*}$ and ${\bf c}^i(d)=c^i_1,\dots,c^i_{d^*}$. In what follows
we shall use two independent sets of occupation numbers: ${\bf c}^i(d)$ and 
${\bf {\bar c}}^i(d)={\bar c}^i_1,\dots,{\bar c}^i_{d^*}$. 

We consider  
\be\label{H-e-N-}
H^{\textsc{e},\textsc{f}+b+{\bf C}}_{d,N}\left(d; {\bf c}^1,{\bf {\bar c}}^1,\dots , {\bf c}^k,{\bf {\bar c}}^k,
\Gamma_1,\dots,\Gamma_b,
\Delta^{(1)},\dots,\Delta^{(\textsc{f})}\right)
\ee

the set of partitions $\{\Delta^i_j,\,j=1,\dots,d^*\}$ is the set of all different partitions of the weight $d$ except $1^d$

there are $k$ copies of such sets, each set is labeled by the superscript $i=1,\dots,k$

the number $c^i_j\ge 0$ (${\bar c}^i_j\ge 0$ says how many times the partition $\Delta^i_j$ (resp. $\Delta^{\bar i}_j$) 
occurs to appear among arguments of $H_{\textsc{e}}^{d,N}$

where $\Gamma_1=\cdots =\Gamma_b=1^{d-2}2$.  

Thus $H^\textsc{e}_{d,N}$ depends
on the set of $b+\textsc{f}+\sum_{i=1}^k \sum_{j=1}^{d^*}\left( c^i_j+{\bar c}^i_j\right)$ partitions.

We will study the weighted sums of $H^{\textsc{e}}_{d,N}$   
over the sets $c_j^i$ and ${\bar c}_j^i$ while partitions $\Delta^{(1)}\,\dots,\Delta^{(\textsc{f})}$ are fixed. 

\vspace{2ex}

Denote by
$$ L^{\textsc{e}}_{d,N}\left(d\,|\,\{n_i,q_i,t_i\}\,|\, b \,|\,
\Delta^{(1)},\dots,\Delta^{(\textsc{f})}\right)=$$

\be\label{L-E}
\sum_{{\bf c}^1,\dots , {\bf c}^k \atop {\bf {\bar c}}^1,\dots , {\bf {\bar c}}^k } \, 
H^{\textsc{e},\textsc{f}+b+{\bf C}}_{d,N}
\left(d; {\bf c}^1,{\bf{\bar c}}^1,\dots , {\bf c}^k,{\bf{\bar c}}^k,
\Gamma_1,\dots,\Gamma_b,
\Delta^{(1)},\dots,\Delta^{(\textsc{f})}\right)\,
\prod_{i=1}^k \, W_i({\bf c}^i) W_i^*({\bf {\bar c}}^i)
\ee
where both $W_i$ and $W_i^*$ depend on $\{n_i,q_i,t_i  \}$ are defined by (\ref{W-i}) and (\ref{W-i-*}).

\br
In case $d\le N$ and $\textsc{e}=1,2$ the number $H^\textsc{e}_{d,N}$ is the Hurwitz number. In what follows we will 
refer this case as Hurwitz case. The $3k$-parametric sum (\ref{L-E}) involves the Hurwitz number of 
$b+\textsc{f}+\sum_{i=1}^k \left(|{\bf c}^i|+ |{\bf {\bar c}}^i| \right)$ partitions where $b+\textsc{f}$
partitions are fixed and the summation ranges over the rest 
${\bf C}:=\sum_{i=1}^k \left(|{\bf c}^i|+ |{\bf {\bar c}}^i| \right)$ partitions. 
\er
 
 In case we choose $d\le N$ we obtain weighted sums of Hurwitz numbers $H_{\Omega}=H^{\textsc{e}}_{d,N}$ for the base
 surface with Euler characteristic $\textsc{e}$.

 \bp \label{tau=Hurwitz-sums}
The series (\ref{hyp-tau-e-f}) generates weighted sums $L^{\textsc{e}}_{d,N}$:
\[
 \tau^{(\textsc{e},\textsc{f})}\left(N,n,\bpow^{(1)},\dots,\bpow^{(\textsc{f})}|\{q_i,t_i,n_i \},\beta\right)
 =
\]
 \[
 \sum_{d,b=0}^\infty\,\frac{\beta^b}{b!}\,\sum_{\Delta^{(1)},\dots ,\Delta^{(\textsc{f})}}\, 
 L^{\textsc{e}}_{d,N}\left(\,\{n_i,q_i^{a_i}t_i^n,t_i\}\,|\, b \,|\,
\Delta^{(1)},\dots,\Delta^{(\textsc{f})}\right)\,
\prod_{i=1}^{\textsc{f}}\,\bpow^{(i)}_{\Delta^{(i)}}
\]

 \ep

\begin{proof}

Let us consider each factor in (\ref{hyp-tau-e-f}) related to a term lebaled by $\lambda$.

We start with the factors in (\ref{r}). 
First of all from (\ref{p-m-a-i-q-i})
\[
p_m(q_i,t_i):=\frac{1-q_i^{ m}}{1-t_i^m},\quad
p_m(0,t_i):=\frac{1}{1-t_i^m}
\]
for a partition $\Delta=(d_1,d_2,\dots)$ we obtain
\[
 \bpow_\Delta(q_i,t_i):=\left(p_1(q_i,t_1)\right)^{d_1} \left(p_2(q_i,t_1)\right)^{d_2} \cdots =
 \left( \frac{1-q_i}{1-t_i}\right)^{d_1} \left( \frac{1-q_i^{ 2}}{1-t_i^2}\right)^{d_2} \cdots
\]
Then from (\ref{char-map-introduction}) we obtain (\ref{ratio-of-s-s=ratio-of-chi}) and (\ref{w-introduction}).
Expanding (\ref{ratio-of-s-s=ratio-of-chi}) we obtain
\[
\sum_{}\,(-1)^{|{\bf c}^i|}\,(-n_i)_{|{\bf c}^i|}\,
\prod_{j=1}^{d^*}\,\frac{\left(w_{\Delta^i_j}(q_i,t_i)\right)^{c^i_j}}{c^i_j!}
\left(\frac{|C_{\Delta^{\bar i}_j}|\,\chi_\lambda(\Delta^i_j)}{\chi_\lambda(1)}\right)^{c^i_j}
\]
in enumerator and
\[  
\sum_{}\,(-1)^{|{\bf {\bar c}}^i|}\,(n_i)_{|{\bf {\bar c}}^i|}\,
\prod_{j=1}^{d^*}\,\frac{\left({w}_{\Delta^{\bar i}_j}(0,t_i) \right)^{{\bar c}^i_j}}{{\bar c}^i_j!}
\left(\frac{|C_{\Delta^{\bar i}_j}|\,\chi_\lambda(\Delta^{\bar i}_j)}{\chi_\lambda(1)}\right)^{{\bar c}^i_j}
\]
in the denominator. The summation ranges over all partitions of the weight $d=|\lambda|$, namely, over
all occupation numbers $c^i_j$ and ${\bar c}^i_j$.

Next, for the factor $e^{\beta f(\lambda,n)}$ in (\ref{r}) from (\ref{Okounkov-f}) we obtain
\[
 \sum_{b}\,\frac{\beta^b}{b!} \left(\frac{|C_\Gamma|\chi_\lambda(\Gamma)}{\chi_\lambda(1)} \right)^b
\]

For each of $\textsc{f}$ Schur functions in the product (\ref{hyp-tau-e-f}), we use (\ref{char-map-introduction}) to write
\[
s_\lambda(\bpow^{(j)})= \, \left(p^{(j)}_1\right)^d
\left(1+
\, \left(p^{(j)}_1\right)^{-d}\sum_{\Delta\neq 1^d}\,\,|C_{\Delta^{(j)}}|\,\,
\frac{\chi_\lambda(\Delta^{(j)})}{\chi_\lambda(1)}\,\,\bpow^{(j)}_{\Delta^{j}} \,
\right)\, \frac{\chi_\lambda(1)}{d!}   
\]

Finely we sum the product of all mentioned factors over $\lambda$, and using the defying relation 
(\ref{Hurwitz-counting})  for Hurwitz numbers we obtain the Proposition \ref{tau=Hurwitz-sums}.

\end{proof}

 Let us reduce $3k$-parametric families of weighted Hurwitz numbers as in examples below. 

{\bf Example 1.} We take $q_i=t_i,\,i=1,\dots,k$. (See Remark \ref{q=t}). Then we obtain $2k$-parametric family:
$$ L^{\textsc{e}}_d\left(\,b\,|\,\{n_i,t_i,t_i\}\,|\,
\Delta^{(1)},\dots,\Delta^{(\textsc{f})}\right)=$$

\[
\sum_{{\bf c}^1,\dots , {\bf c}^k \atop {\bf {\bar c}}^1,\dots , {\bf {\bar c}}^k } \, 
H_\Omega\left(d; {\bf c}^1,{\bf {\bar c}}^1,\dots , {\bf c}^k,{\bf {\bar c}}^k,
\Gamma_1,\dots,\Gamma_b,
\Delta^{(1)},\dots,\Delta^{(\textsc{f})}\right)\,W
\]
\[
W=
\prod_{i=1}^k \,\frac{
(-n_i)_{|{\bf c}^i|}\,(n_i)_{|{\bf {\bar c}}^i|}}{(-1)^{|{\bf c}^i|+|{\bf {\bar c}}^i|}}\,
\prod_{j=1}^{d^*}\,\frac{1}{c^i_j!}
\prod_{j=1}^{d^*}\,\frac{\left({w}_{\Delta^{\bar i}_j}(0,t_i) \right)^{{\bar c}^i_j}}{{\bar c}^i_j!}
\]
Take further $n_i=-1,\,i=1,\dots,k$. Due to the factor $(n_i)_{|{\bf {\bar c}}^i|}$  there is a single profile in each
color group labeled by bar. After some changing of notations under summation we obtain

\bea\label{L-E-Ex1}
\sum_{{\bf c}^1,\dots , {\bf c}^k \atop \Delta^1,\dots , \Delta^k } \, 
H_\Omega\left(d; {\bf c}^1,\Delta^1,\dots , {\bf c}^k,\Delta^k,
\Gamma_1,\dots,\Gamma_b,
\Delta^{(1)},\dots,\Delta^{(\textsc{f})}\right)\,W\,
\\
W=-\left(\prod_{i=1}^k 
\frac{
{|{\bf c}^i|!}}{(-1)^{|{\bf c}^i|}}\,
\prod_{j=1}^{d^*}\,\frac{1}{c^i_j!}
\right)
\prod_{i=1}^k \,
{w}_{\Delta^{i}}(0,t_i) 
\eea
where
\be\label{w:q=t^2}
w_{\Delta^i}(0,t_i)=\,\left(1-t_i\right)^{d}\,
\prod_{s=1}^{\ell(\Delta^i)}\frac{1}{1-t_i^{d^{i}_s} }\,,\qquad
\ee
where $\Delta^i=(d^i_1,d^i_2,\dots),\,i=1,\dots,k$.

{\bf Example 2.}
The sum (\ref{L-E}) has an interesting limit in case $q_i\to 1$ for all $i$. In this case thanks to Remark \ref{q=0,t=1} 
the terms where any of ${\bar c}^i_j$ is nonvanishing vanishes. Then, denoting
\[ \lim_{q_i\to 1\atop i=1,\dots,k} L^\textsc{e}_d \left(\,b\,|\,\{n_i,q_i^{a_i},q_i\}\,|\,
\Delta^{(1)},\dots,\Delta^{(\textsc{f})}\right)=\texttt{L}_{\textsc{e}}\left(d\,|\,b\,|\,\{n_i,a_i \}\,|\,
\Delta^{(1)},\dots,\Delta^{(\textsc{f})}\right), \] 
we obtain $2k$-parametric family
$$
\texttt{L}^{\textsc{e}}_d\left(\,b\,|\,\{n_i,a_i \}\,|\,
\Delta^{(1)},\dots,\Delta^{(\textsc{f})}\right)=
$$

\be\label{L-E-Ex2}
\sum_{{\bf c}^1,\dots , {\bf c}^k  } \, 
H_\Omega\left(d; {\bf c}^1,\dots , {\bf c}^k,
\Gamma_1,\dots,\Gamma_b,
\Delta^{(1)},\dots,\Delta^{(\textsc{f})}\right)\,
\prod_{i=1}^k \, \frac{(-1)^{|{\bf c}^i|}}{a_i^{d|{\bf c}^i|}}(-n_i)_{|{\bf c}^i|}\,
\prod_{j=1}^{d^*}\,\frac{a_i^{\ell(\Delta^i_j)c^i_j}}{c^i_j!}
\ee
Let us note that the exponent $\sum_{j=1}^{d^*} \ell(\Delta^i_j)c^i_j$ has the meaning of the sum of lengths of all 
partitions of the color group $i$ excluding thye partition $1^d$, while $|{\bf c}^i|:=\sum_{j=1}^{d^*}$ has the meaning of 
total number of these partitions in this color group.

If we put $\textsc{e}=\textsc{f}=2$ and $n_i =\pm 1$ in (\ref{L-E-Ex2}) we obtain the case considered in \cite{HO-2014}.

{\bf Example 3.}
We can further reduce (\ref{L-E-Ex2}) putting $n_i=1$ for $i=1,\dots, k$. In this case each $(-n_i)_{|{\bf c}^i|}=0$
until $|{\bf c}^i|=1$. It means that a single partition, $\Delta^i_{j_i}$, in each color group $i$, contributes to the set of
the arguments of $H^\textsc{e}_{d,N}$, and the summation in (\ref{L-E-Ex2}) is the summation over the set of various 
$j_1,\dots,j_{k}$ where $1\le j_i \le d^*$.
Instead of (\ref{L-E-Ex2}) we obtain $k$-parametric family
$$
\texttt{L}^{\textsc{e}}_d\left( b\,|\,\{1, a_i\}\,|\,
\Delta^{(1)},\dots,\Delta^{(\textsc{f})}\right)=
$$

\be\label{L-E-Ex3}
\sum_{1\le j_1,\dots,j_k\le d^* } \, 
H_\Omega\left(d; \Delta^1_{j_1},\dots , \Delta^k_{j_k},
\Gamma_1,\dots,\Gamma_b,
\Delta^{(1)},\dots,\Delta^{(\textsc{f})}\right)\,
\prod_{i=1}^k \, 
\,a_i^{\ell(\Delta^i_{j_i})-d}
\ee

Relations (\ref{L-E-Ex2})-(\ref{L-E-Ex3}) generates sums of Hurwitz numbers with certain conditions upon
the partitions lengths. For instance, from (\ref{L-E-Ex3}) we obtain
$$
\res_{a_1=0} a_i^{d-l_1-1}\cdots \res_{a_k=0} a_k^{d-l_k-1} \, \texttt{L}^{\textsc{e}}_d\left(\,b\,|\,\{1, a_i\}\,|\,
\Delta^{(1)},\dots,\Delta^{(\textsc{f})}\right) =
$$

\be\label{L-E-Ex3-res}
\sum  H_\Omega(d,\Gamma_1,\dots,\Gamma_b,
\Delta^1,\dots,
\Delta^k,
\Delta^{(1)},\dots,\Delta^{(\textsc{f})})
\ee
where the sum is taken over all partitions $\Delta^i$ of the fixed lengths $l_i$.

{\bf Example 4.} One can re-write (\ref{L-E-Ex2}) as different $2k$-parametric family as follows. 
Using
\[
(k)_{|{\bf c}^i|}=\frac{\Gamma(|{\bf c}^i|+k)}{\Gamma(k)},\quad
 (1-x_i)^{-|{\bf c}^i|} =\,1+\sum_{k=1}^\infty\, 
 \frac{x_i^k}{k!}\frac{\Gamma(|{\bf c}^i|+k)}{\Gamma(|{\bf c}^i|)}=
 1+\sum_{k=1}^\infty\, 
 \frac{x_i^k}{k!}\frac{(k)_{|{\bf c}^i|}\Gamma(k)}{\Gamma(|{\bf c}^i|)}
\]
and taking the sum of (\ref{L-E-Ex2}) over $n_i=-1,-2,\dots$ with the weight $-\frac{x_i^{-n_i}}{n_i}$ ,
$x_i=1-z_i$, we obtain 
$$
(-1)^k\sum_{n_1,\dots,n_k < 0}\,\prod_{i=1}^k\frac{(1-z_i)^{-n_i}}{n_i}\,\texttt{L}^{\textsc{e}}_d
\left(\,b\,|\,\{n_i,a_i \}\,|\,
\Delta^{(1)},\dots,\Delta^{(\textsc{f})}\right)=
$$

\be\label{L-E-Ex4}
\sum_{{\bf c}^1,\dots , {\bf c}^k  } \, 
H_\Omega\left(d; {\bf c}^1,\dots , {\bf c}^k,
\Gamma_1,\dots,\Gamma_b,
\Delta^{(1)},\dots,\Delta^{(\textsc{f})}\right)\,\prod_{i=1}^k\,\left(z_i^{-|{\bf c}^i|}-1\right)\,
(-1)^{|{\bf c}^i|}
\Gamma(|{\bf c}^i|)\,
\prod_{j=1}^{d^*}\,\frac{a_i^{\ell(\Delta^i_j)c^i_j}}{c^i_j!}
\ee
where $z_i$ and $a_i$ , $i=1,\dots,k$ are free parameters. Picking up terms at given powers of $z_i$ and $a_i$
one can obtain the sums of Hurwitz numbers under the conditions: the number of profiles in each color group, $|{\bf c}^i|$, 
is fixed and the sum of profile lenghts inside each color group, $\sum_{j=1}^{d^*}\ell(\Delta^i_j)c^i_j$, is also fixed, see
(\ref{T-BKP}),(\ref{T-KP}) below.
Let us note, that the term related to $|{\bf c}^i|=0$ should be treated as  
$\lim_{|{\bf c}^i| \to 0}\left(z_i^{-|{\bf c}^i|}-1\right)\,\Gamma(|{\bf c}^i|)=-\log z$.

\section{Toda lattice and BKP tau functions \label{BKP-section}}

\subsection{Pochhammer symbols and content products}
For a given partition $\lambda=(\lambda_1,\dots,\lambda_l)$ and a function on the one-dimensional lattice
$r(x), \, x\in \mathbb{Z}$, we introduce the generalized Pochhammer symbol $r_n(x)$ as
\be
r_n(x)\,=\,r(x)r(x+1)\cdots r(x+n-1)
\ee
and the generalized Pochhammer symbol, $r_\lambda$, related to a partition $\lambda$ as
\be
r_\lambda(x)\,=\,r_{\lambda_1}(x)r_{\lambda_2}(x-1)\cdots r_{\lambda_l}(x-l+1)
\ee
It may be written also as a content product as follows
\[
 r_\lambda(x)\,=\,\prod_{i,j\in\lambda} r(x+j-i)
\]
where $j-i$ is a content of the node in $i$-th row and $j$-th column of the Young diagram of a partition $\lambda$,
see \cite{OS-TMP} for more details.

The content product plays an important role in many combinatorial problems, see for instance \cite{GJ} where links to
integrable systems were also pointed out.

\br\label{fg}
 (1) If $r=fg$, then $r_\lambda(x)=f_\lambda(x)g_\lambda(x)$. (2) If ${\tilde r}(x)=\left(r(x)\right)^n,\,n\in\mathbb{C}$,
then ${\tilde r}_\lambda(x)=
\left({r}_\lambda(x)\right)^n$.
\er

{\bf Example I}. For $r(x)=x$ the generalized Pochhammer symbol coincides with the familiar one:
\[
 r_\lambda(x)=(x)_\lambda\,,\quad
 (x)_\lambda :=(x)_{\lambda_1}(x-1)_{\lambda_1} \cdots (x-l+1)_{\lambda_l}\,,
 \quad (x)_n=\frac{\Gamma(x+n)}{\Gamma(x)}
\]
If we take $r(x)=x^{n},\,n\in\mathbb{C}$, then $r_\lambda(x)=\left( (x)_\lambda \right)^n$

{\bf Example II}.
For $r(x)=1-qt^x$ the generalized Pochhammer symbol coincides with the $q$-deformed one 
(for some reasons we replace the letter $q$ by the letter $t$):
\[
r_\lambda(x)=(qt^x;t)_\lambda ,\quad
 (qt^x;t)_\lambda :=(qt^x;t)_{\lambda_1}(qt^{x-1};t)_{\lambda_1} \cdots (qt^{x-l+1};t)_{\lambda_l}\,,
 \quad (qt^x;t)_n=(1-qt^x)\cdots (1-qt^{x+n-1})
\]
This case may be also referred as trigonometric. The trigonometric $r$ may be viewed as the infinite product of
$r$ from the Example I.
If we take $r(x)=\left(1-qt^x \right)^{n},\,n\in\mathbb{C}$, then $r_\lambda(x)=\left( (qt^x;t)_\lambda \right)^n$,
see Remark \ref{fg}.

{\bf Example III}.
For $r(x)=\theta(cx,t)$, where $\theta(cx,t)$ is the Jacoby theta function, 
\[
 \theta(cx,\tau):=\sum_{n\in\mathbb{Z}}\exp (\pi i n^2 \tau + 2c\pi i n x)=
 (q;q)_\infty
\prod_{n=1}^\infty \left(1 + q^{n-{\frac 12}} t^{ x}\right)\left(1 + 
q^{n-{\frac 12}} t^{- x}\right) 
\]
where $q=e^{2\pi i \tau},\, t=e^{2c\pi i}$. As one can note the elliptic $r$ is the infinite product
of ``trigonometric'' $r$ from the Example II.
The generalized Pochhammer symbol is then the elliptic Pochhammer symbol:
\[
r_\lambda(x)=[xc;t]_\lambda ,\quad
 [cx;t]_\lambda :=[cx;t]_{\lambda_1}[c(x-1);t]_{\lambda_1} \cdots [c(x-l+1);t]_{\lambda_l}\,,
 \quad [cx;t]_n:=\theta(cx|t)\cdots \theta(c(x+n-1)|t)
\]
If we take $r(x)=\left( \theta(cx,t) \right)^{n},\,n\in\mathbb{C}$, then $r_\lambda(x)=\left( [xc;t]_\lambda \right)^n$

The shall refer the cases $r(x)=a+cx,\,r(x)=1-qt^x,\,r(x)=\theta(cx,\tau)$ as respectively rational, trigonometric and 
elliptic ones.

{\bf Example IV}. For $r(x)=e^{\beta x}$ we obtain the case considered in \cite{Okounkov-2000}
\[
 r_\lambda(x)=e^{\beta f_2(\lambda,x)}
\]
where
\[
f_2(\lambda,x)\,=\,\frac 12 \,\sum_{i} \left[(x+\lambda_i-i+\frac12)^2 - (x-i+\frac 12)^2\right],
\qquad
 f_2(\lambda,0)\,+\,x|\lambda|
\]
In \cite{Okounkov-2000} it was shown that $f_2$ may be written in form  
 (\ref{Okounkov-f}) and this is a key relation to link the generating function for Hurwitz numbers 
 to integrable systems.

\vspace{2ex}

\paragraph{Examples of the parametrizations of $r$}
It may be suitable to choose the function $r$ as the product of
functions $r^{(1)}\cdots r^{(k)}$ (then $r_\lambda= r^{(1)}_\lambda\cdots r^{(k)}_\lambda$, see Remark \ref{fg}), where 
each one is parametrized by the set of parameters $\beta^{(i)}=(\beta^{(i)}_1,\beta^{(i)}_2,\dots)$
in one of three ways:
\bea\label{r-for-x-rational-Pochhamer}
({\rm I})\, :\qquad\qquad\qquad\qquad\qquad
r^{(i)}(x)\,&=&\,e^{\beta_0^{(i)} +\sum_{m > 0} \,   \frac1m \beta_m^{(i)} x^m}
\\
\label{r-for-t_i-x-trig-Pochhamer}
({\rm II})\, :\qquad\qquad\qquad\qquad\qquad
r^{(i)}(x)\,&=&\,e^{\sum_{m > 0} \,   \frac1m \beta_m^{(i)} t^{mx}_i}
\\
\label{r-for-theta-Pochhamer}
({\rm III})\, :\qquad\qquad\qquad\qquad\qquad
r^{(i)}(x)\,&=&\,e^{\sum_{m > 0} \, \frac 1m \beta_m^{(i)} \left(e^{mc_i(a_i+x)}+e^{-mc_i(a_i+x)} \right)}
\eea
To obtain respectively the usual, the trigonometric and elliptic Pochhammer symbols we
choose
\bea\label{beta-r-for-x-rational-Pochhamer}
\beta^{(i)}_m \,&=& \,n_i\, (-a_i)^{-m}\,,\quad \beta^{(i)}_0= \,n_i\, \log a_i
\\
\label{beta-for_i-x-trig-Pochhamer}
 \beta_m^{(i)}  \,&=& \, -n_i q_i^{m}
\\
\label{beta-for-theta-Pochhamer}
 \beta_m^{(i)}  \,&=& \, n_i(-1)^m\frac{q_i^m}{1-q_i^m}
\eea
Then respectively we obtain
\be\label{r-lambda-for-x-rational-Pochhamer}
r_\lambda(x)\,=\,\prod_{i=1}^k\, \left( (a_i+n)_\lambda \right)^{n_i}
\ee
\be\label{r-lambda-for_i-x-trig}
r_\lambda(x)\,=\,\prod_{i=1}^k\,\left((q_it_i^n;t_i)_\lambda \right)^{n_i}
\ee
\be\label{r-lambda-for-theta-Pochhamer}
r_\lambda(x)\,=\,\prod_{i=1}^k\,\left([c_i(a_i+x),q_i]_\lambda\right)^{n_i}
\ee
where $ a_i,q_i,t_i,n_i $ are complex numbers.

Now let us consider limiting expressions of rational, trigonometric and ellitpic cases
\bea\label{limit-rac}
 r^{(i)}(x)\,&=&\, \left(1+\frac{x}{n_ia_i}\right)^{n_i} \,\,\to \,\exp \,{\frac{x}{a_i}}
\\
\label{limit-trig}
 r^{(i)}(x)\,&=&\,\left(1+\frac{q_i}{n_i} t_i^x\right)^{n_i}\,\, \to \, \exp \, {q_it_i^x}
\\
\label{limit-elliptic}
 r^{(i)}(x)\,&=&\,\left(\theta(a_i,\tau_i )\right)^{n_i} \,\qquad \to \, \exp \, {q_i^{-\frac12}(t_i^x+t_i^{-x}})
\eea
Formula similar to (\ref{limit-rac}) was previously obtained in \cite{HarnadMathieu-sept-2014}. It result in
the choice of $r$ as in Example IV above (where we put $\beta=\frac 1a$),
and allows to view
the Okounkov tau function \cite{Okounkov-2000} as a limit $n\to\infty$ of the certain hypergeometric function of matrix 
arguments \cite{GRich} of type ${_{n+1}F}_0$.


Let us note that the variables $\beta$ in the parametrization (\ref{r-for-theta-Pochhamer}) is a triangle transform of 
variables ${t^*}$
introduced in \cite{O-2003}. 
This follows from the relation $r(x)=e^{U_{x-1}-U_x}$, where $U_x=\sum_{m\neq 0} \,t^*_m t^{mx}$.

It actually means that $\tau_r^{\rm 2KP}(N,n,\bpow^{(1)},\bpow^{(2)})$ is a 2KP tau function in variables 
$(n,\bpow^{(1)},\beta)$ in case $p^{(2)}_m=\frac{1}{1-t^m}$, see details in \cite{HO-future}.

At last we need the important
\bl
\be
s_\lambda(\bpow(q,t))\,=\,(q;t)_\lambda\,s_\lambda(\bpow(0,t)
\ee
and its limiting case
\be
s_\lambda(\bpow(a))\,=\,(a)_\lambda\,s_\lambda(\bpow_\infty)
\ee
where $\bpow(q,t)$ is defined by 
\be\label{p-m-q-t}
\bpow(q,t)=\left(p_1(q,t),p_2(q,t),\dots \right)\,,\quad
p_m(q,t):=\frac{1-q^{ m}}{1-t^m}
\ee
$\bpow_\infty=(1,0,0,\dots)$ and
\be\label{bpow(a)}
\bpow(a)\,=\,(a,a,\dots )
\ee
\el
which may be easily obtained from the consideration in Ch I of \cite{Mac}. This Lemma allows to identify series
(\ref{hyp-tau-e-f}) where $\textsc{e}=\textsc{f}=1,2$ with BKP and 2KP tau functions \cite{OS-2000},\cite{OST-I}.

\subsection{TL tau function and TL hypergeometric tau function.} 
Here we recall few facts about the Toda lattice rau functions,
and for details we  refer to the paper \cite{UT}.
The simplest Hirota equation for the Toda lattice is
\be\label{the-first-TL-Hirota}
\frac{\partial^2 \tau^{\rm TL}\left(n,\bpow, \bbpow \right)}{\partial p_1\partial {\bar p}_1}
\tau^{\rm TL}\left(n,\bpow, \bbpow \right) -
\frac{\partial \tau^{\rm TL}\left(n,\bpow, \bbpow \right)}{\partial p_1}
\frac{\partial\tau^{\rm TL}\left(n,\bpow, \bbpow \right)}{ \partial {\bar p}_1} =
-\tau^{\rm TL}\left(n+1,\bpow, \bbpow \right) \tau^{\rm TL}\left(n-1,\bpow, \bbpow \right)
\ee
TL tau function may be written in form
\be\label{TL-tau-general}
 \tau^{\rm TL}\left(n,\bpow, \bbpow \right)=\sum_{\lambda\in\Pa}\,s_\lambda(\bpow)\,g_{\lambda,\mu}(n) s_\mu(\bbpow)
\ee
where $g_{\lambda,\mu}$ is a determinant, see \cite{TakasakiSchur}. The Schur function is defined as follows
\be\label{Schur-pol}
s_\lambda(\bpow) \,=\, \det \left( s_{\lambda_i-i+j}(\bpow) \right)_{i,j}\,,\quad
e^{\sum_{m>0} \, \frac 1m \,z^m\, p_m} \,=: \sum_{m\ge 0} \,z^m\, s_m(\bpow)
\ee

We shall denote the length of a partition $\lambda$ by $\ell(\lambda)$ and the weight of $\lambda$ by $|\lambda|$,
see \cite{Mac}.

\paragraph{TL tau function of the hypergeometric type.} These are
\be\label{TL-sums}
g(n)\sum_{\lambda\in \Pa } \, r_\lambda(n)\, s_\lambda(\bpow)\, s_\lambda(\bbpow) \,=
: \tau_r^{\rm TL}(n,\bpow)
\ee
In case $r$ vanishes at certain site number $M$ it is better to speak about

\paragraph{Semi-infinite TL tau function} with the origin at the site number $M$. These are
\be\label{semiinfinteTL-sums}
g(n)\sum_{\lambda\in \Pa \atop \ell(\lambda)\le n-M} \, r_\lambda(n)\, s_\lambda(\bpow)\, s_\lambda(\bbpow) \,=
: \tau_r^{\rm TL}(M,n,\bpow)
\ee
This tau function is a particular case of the previous one if we choose $r(M)=0$.

In \cite{OS-2000} there were written down two main examples of such tau functions:
\be\label{hyper-matrix-arg-RichGross}
\tau_r(n,\bpow^{(1)},\bpow^{(2)})=\sum_\lambda\, \left( s_\lambda(\bpow_\infty) \right)^{q-p}\,
\frac{\prod_{i=1}^p\,s_\lambda(\bpow(a_i+n))}
{\prod_{i=1}^q\,s_\lambda(\bpow(b_i+n)) }\,
s_\lambda(\bpow^{(1)}) s_\lambda(\bpow^{(2)}) 
\ee
which may be related to the hypergeometric function of matrix argument \cite{GRich},
and
\be\label{Milne=TL}
\tau_r(n,\bpow^{(1)},\bpow^{(2)})=\sum_\lambda\, \left( s_\lambda(\bpow_\infty) \right)^{q-p}\,
\frac{\prod_{i=1}^p\,s_\lambda(\bpow(t^{a_i+n},t))}{\prod_{i=1}^q\,s_\lambda(\bpow(t^{b_i+n},t))}\,
s_\lambda(\bpow^{(1)}) s_\lambda(\bpow^{(2)}) 
\ee
which which may be related to the Milne's hypergeometric functions \cite{Milne}.

\vspace{2ex}

\subsection{BKP tau function.}
We are
interested in a certain subclass of the BKP tau functions (\ref{BKP-tau-general}) written down in \cite{OST-I} and called
BKP hypergeometric tau functions, and also in the similar class of TL tau functions (\ref{TL-tau-general}) found in
\cite{KMMM}, \cite{OS-2000}.

There are two different BKP hierarchies of integrable equations, one was introduced by the Kyoto group in \cite{JM}, the other was
introduced by V. Kac and J. van de Leur in \cite{KvdLbispec}. We need the last one. This hierarchy includes the celebrated
KP one as a particular reduction.  In a
certain way (see \cite{LeurO-2014}) the BKP hierarchy may be related to the three-component KP hierarchy introduced in \cite{JM}
(earlier described in \cite{ZakharovShabat} with the help of L-A pairs with matrix valued coefficients). For a detailed
description of the BKP we refer readers to the original work \cite{KvdLbispec}, and here
we write down the first non-trivial equations for the BKP tau function (Hirota equations). These are
\bea\label{Hirota-elementary-1}
\frac 12 \frac{\partial\tau(N,n,\bpow)}{\partial p_2} \tau(N+1,n,\bpow)-
\frac 12 \tau(N,n,\bpow)\frac{\partial\tau(N+1,n,\bpow)}{\partial p_2} \nonumber
+\frac 12 \frac{\partial^2\tau(N,n,\bpow)}{\partial^2 p_1} \tau(N+1,n,\bpow)\\
+\frac 12 \tau(N,n,\bpow)\frac{\partial^2\tau(N+1,n,\bpow)}{\partial^2 p_1}
- \frac{\partial\tau(N,n,\bpow)}{\partial p_1}\frac{\partial\tau(N+1,n,\bpow)}{\partial p_1}
=\tau(N+2,n,\bpow)\tau(N-1,n,\bpow)
\eea
\bea\label{Hirota-elementary-2}
\frac 12 \tau(N,n+1,\bpow)\frac{\partial^2 \tau(N+1,n,\bpow)}{\partial^2 p_1}-
\frac 12 \frac{\tau(N,n+1,\bpow)}{\partial^2 p_1} \tau(N+1,n,\bpow)=\nonumber
\\
\frac{\partial\tau(N+2,n,\bpow)}{\partial p_1}\tau(N-1,n+1,\bpow)-
 \frac{\partial \tau(N+1,n+1,\bpow)}{\partial p_1}\tau(N,n,\bpow)
\eea
The BKP tau functions depend on the set of higher times $t_m=\frac 1m p_m$, $m>1$ and the discrete parameter $N$.
In \cite{OST-I} the second discrete parameter $n$ was added and equation (\ref{Hirota-elementary-2}) relates BKP tau
functions with neighboring $n$. The complete set of the Hirota equations with two discrete parameters is written down in the
Appendix.

The general solution to Hirota equations may be written as
\be\label{BKP-tau-general}
\tau^{BKP}\left(N,n,\bpow \right)=\sum_{\lambda\in\Pa}\,A_\lambda(N,n) s_\lambda(\bpow)
\ee
where $\Pa$ is the set of all partitions and
where $A_\lambda$ solves Plucker relations for isotropic Grassmannian and may be written in a pfaffian form.

\paragraph{BKP tau function of the hypergeometric type}
We consider sums over partitions of form
\be\label{BKP-sums}
g(n)\sum_{\lambda\in \Pa \atop \ell(\lambda)\le N} \, r_\lambda(n)\, s_\lambda(\bpow) \,=
: \tau_r^{\rm BKP}(N,n,\bpow)
\ee
where $\Pa$ is the set of all partitions, $s_\lambda$ are the Schur functions \cite{Mac} and the semi-infinite
set $\bpow=(p_1,p_2,\dots )$ is related to the called higher times in the soliton theory
${\bf t}=(t_1,t_2,\dots )$ via $p_m=mt_m$.
The constant $g(n)$ is not important and may be found in Appendix \ref{fermionic-appendix}, see (\ref{r-U}),(\ref{g(n)}).

\br\label{scaling}
For hypergeomtric tau functions (\ref{TL-sums}) and (\ref{BKP-sums}) there is an obvious transformation
$r_\lambda \to a^{-|\lambda|}r_\lambda,\,p^{1}_m \to ap^{(1)}_m,\,m>0$, which does not change the tau functions.
\er

\paragraph{Remarks.} According to \cite{OS-2000}, \cite{OST-I}

\vspace{2ex}

(1)
$\tau_r^{\rm TL}$ solves certain linear equation generalizing Gauss equation for Gauss hypergeometric function
which may be also referred as a ``string equation''

\vspace{2ex}

(2)
There are various determinantal formulae to present $\tau_r^{\rm TL}$, and pfaffian formulae to present
$\tau_r^{\rm BKP}$

\vspace{2ex}

(3)
Both $\tau_r^{\rm TL}$ and $\tau_r^{\rm BKP}$ may be obtained by the action of vertex operators on certain simple
functions

\vspace{2ex}

Also \cite{OST-I , HO-convolution , OShiota-2004} :

\vspace{2ex}

(4)
Sums (\ref{BKP-sums}) and (\ref{TL-sums}) may be considered as partition functions for models of random partitions
where a partition $\lambda$ contributes the weights $r_\lambda(n) \, s_\lambda(\bpow)$, or
$r_\lambda(n) \,s_\lambda(\bpow) \,s_\lambda(\bbpow)$ respectively

\vspace{2ex}

(5)
For certain specifications of $r$ and $\bpow$ sums (\ref{BKP-sums}) and (\ref{TL-sums}) may be viewed as multisoliton
tau functions. Similarly, for the same specifications,
they may be viewed as discrete versions of matrix models

\section{Generating Hurwitz series and the BKP and 2KP tau functions \label{Hurwitz-BKP-section}}

As we have seen
\[
\tau^{\textsc{e},\textsc{f}}\left(N,n,\beta,\{n_i,q_i,t_i\}\,|\,\bpow^{(1)},\dots,\bpow^{(\textsc{f})}\right)=
\]
\[
 \sum_{\lambda\atop \ell(\lambda)\le N}\,r^{{\bf n},{\bf q},{\bf t}}_\lambda(n)\prod_{i=1}^{\textsc{f}}s_\lambda(\bpow^{(i)})=
\]
\[
\sum_{d\ge 0}\sum_{\Delta^{(1)},\dots,\Delta^{(\textsc{f})}\in\Pa\atop
|\Delta^{(j)}|=d,\,j=1,\dots,\textsc{f}} 
\frac{\beta^b}{b!}
L^{\textsc{e},\textsc{f}}_{d,N}\left(b\,|\,\{n_i,q_it_i^n,t_i\}\,|\,
\Delta^{(1)},\dots,\Delta^{(\textsc{f})}\right)\prod_{j=1}^\textsc{f}\,
\bpow^{j}_{\Delta^{(j)}}
\]
where
\[
 r^{{\bf n},{\bf q},{\bf t}}_\lambda(n)\,=\,\left(qe^{\beta n}\right)^{|\lambda|}e^{\beta f_2(\lambda)}\,\prod_{i=1}^k\,
\left(\frac{(q_it_i^n;t_i)_\lambda}{\left(1-q_i\right)^{d}}\right)^{n_i}
\]
and where $
L^{\textsc{e},\textsc{f}}_{d,N}\left(b\,|\,\{n_i,q_it_i^n,t_i\}\,|\,
\Delta^{(1)},\dots,\Delta^{(\textsc{f})}\right)$ for $N\ge d$ is the weighted sum of Hurwitz numbers for the 
$d$-fold covering of the surface of Euler characteristic $\textsc{e}$ with fixed ramification profiles 
$\Delta^{(1)},\dots,\Delta^{(\textsc{f})}$ and $b$ profiles $\Gamma=1^{d-2}2$, and summation runs over additional profiles 
as it was described in Section \ref{Weighted-sums-Section}.

Let $\textsc{e}=\textsc{f}$.
We have
\begin{Theorem} \label{Th-BKP} 
For any complex numbers $\{n_i,q_i,t_i\}$ and integers 
$n$ and $N\ge 0$
the generating Hurwitz series 
$$\tau^{1,1}\left(N,n,\beta,\{n_i,q_i^{a_i},q_i\}\,|\,\bpow \right)$$ 
\[
=\,\sum_{d\ge 0}\sum_{\Delta\in\Pa\atop
|\Delta|=d}\, 
\frac{\beta^b}{b!}\,
L^{1,1}_{d,N}\left(b\,|\,\{n_i,q_it_i^n,t_i\}\,|\,
\Delta\right)\,
\bpow_{\Delta}
\]
\[
= \sum_{\lambda\atop \ell(\lambda)\le N}\,
r^{{\bf n},{\bf q},{\bf t}}_\lambda(n)\,s_\lambda(\bpow)
\]
where
\be
 r^{{\bf n},{\bf q},{\bf t}}_\lambda(n)\,=\,\left(qe^{\beta n}\right)^{|\lambda|}e^{\beta f_2(\lambda)}\,\prod_{i=1}^k\,
\left(\frac{(q_it_i^n;t_i)_\lambda}{\left(1-q_i\right)^{d}}\right)^{n_i}
\ee
 is a $\tau-$function of the BKP 
hierarchy where $\bpow=\left(p_1,p_2,\dots \right)$ plays the role of higher times. 
\end{Theorem}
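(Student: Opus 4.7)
The plan is to realize the right-hand sum as a hypergeometric BKP tau function in the sense of (\ref{BKP-sums}), and then use Proposition \ref{tau=Hurwitz-sums} to identify it with the Hurwitz-generating middle expression. The BKP-hypergeometric result of \cite{OST-I}, recalled here as (\ref{BKP-sums}), asserts that for any function $\tilde r:\mathbb{Z}\to\mathbb{C}$ the series $\sum_{\ell(\lambda)\le N}\tilde r_\lambda(n)\,s_\lambda(\bpow)$ is a tau function of the Kac--van de Leur BKP hierarchy, where $\tilde r_\lambda(n)=\prod_{(a,b)\in\lambda}\tilde r(n+b-a)$ is the content product of $\tilde r$. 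So the proof reduces to (i) identifying the two sides via Proposition \ref{tau=Hurwitz-sums} specialized to $\textsc{e}=\textsc{f}=1$, and (ii) exhibiting $r^{\mathbf n,\mathbf q,\mathbf t}_\lambda(n)$ as a content product on $\mathbb{Z}$.

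For step (ii) I would split
\[
r^{\mathbf n,\mathbf q,\mathbf t}_\lambda(n)\,=\,q^{|\lambda|}\,\cdot\,e^{\beta n|\lambda|}e^{\beta f_2(\lambda)}\,\cdot\,\prod_{i=1}^k(1-q_i)^{-n_i|\lambda|}\,\cdot\,\prod_{i=1}^k (q_it_i^n;t_i)_\lambda^{n_i}
\]
and verify each factor is itself a content product. The terms $q^{|\lambda|}$ and $(1-q_i)^{-n_i|\lambda|}$ are the trivial content products of the constants $q$ and $(1-q_i)^{-n_i}$. The Okounkov factor $e^{\beta n|\lambda|}e^{\beta f_2(\lambda)}$ equals $\prod_{(a,b)\in\lambda}e^{\beta(n+b-a)}$ via the standard identity $f_2(\lambda,0)=\sum_{(a,b)\in\lambda}(b-a)$, i.e.\ it is the content product of $e^{\beta x}$ at shift $n$ (Example IV of the preceding section). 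Each factor $(q_it_i^n;t_i)_\lambda^{n_i}$ is the content product of $(1-q_it_i^x)^{n_i}$ at shift $n$ by Example II combined with Remark \ref{fg} for raising to the $n_i$-th power. Multiplicativity of content products (Remark \ref{fg}) then yields $r^{\mathbf n,\mathbf q,\mathbf t}_\lambda(n)=\tilde r_\lambda(n)$ for
\[
\tilde r(x)\,=\,q\,e^{\beta x}\,\prod_{i=1}^k\Bigl(\frac{1-q_it_i^x}{1-q_i}\Bigr)^{n_i}.
\]

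Applying (\ref{BKP-sums}) to this $\tilde r$ then identifies $\sum_{\ell(\lambda)\le N}r^{\mathbf n,\mathbf q,\mathbf t}_\lambda(n)\,s_\lambda(\bpow)$ with a BKP tau function in the variables $\bpow$, with $N$ and $n$ playing the role of the discrete BKP parameters; the overall scalar prefactor $g(n)$ appearing in (\ref{BKP-sums}) is inessential, and the rescaling freedom of Remark \ref{scaling} absorbs any global constant emerging from the $(1-q_i)^{-n_i|\lambda|}$ and $q^{|\lambda|}$ contributions into a renormalization of $\bpow$. The main obstacle is the content-product factorization of step (ii): it is pure bookkeeping once Examples II and IV have been pinpointed as the right model cases, but it is the only place where the specific shape of the generalized $q$-Pochhammer symbols and the Okounkov factor enter essentially. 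No new Hirota-equation verification is required, because the substantive integrable-systems input has already been carried out in \cite{OST-I}; granted the factorization, the theorem is immediate.
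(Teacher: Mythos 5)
Your proposal is correct and follows essentially the paper's own route: the paper likewise invokes Proposition \ref{tau=Hurwitz-sums} together with the hypergeometric BKP form (\ref{BKP-sums}) of \cite{OST-I}, choosing $r(x)=e^{\beta x}\prod_{i=1}^k(1-q_it_i^x)^{n_i}$ so that $r_\lambda(n)=e^{\beta f_2(\lambda,n)}\prod_i\left((q_it_i^n;t_i)_\lambda\right)^{n_i}$. Your only addition is the explicit bookkeeping of the constant factors $q$ and $(1-q_i)^{-n_i}$ as trivial content products (or via Remark \ref{scaling}), which the paper leaves tacit.
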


\begin{Theorem} \label{Th-KP} 
For any complex numbers $\{n_i,q_i,t_i\}$ and integers 
$n$ and $N\ge 0$
the generating Hurwitz series 
$$\tau^{2,2}\left(N,n,\beta,\{n_i,q_i^{a_i},q_i\}\,|\,\bpow^{(1)},\bpow^{(2)}\right)$$ 
\[
=\,\sum_{d\ge 0}\sum_{\Delta^{(1)},\Delta^{(2)}\in\Pa\atop
|\Delta^{(j)}|=d,\,j=1,2}\, 
\frac{\beta^b}{b!}\,
L^{2,2}_{d,N}\left(b\,|\,\{n_i,q_it_i^n,t_i\}\,|\,
\Delta^{(1)},\Delta^{(2)}\right)\prod_{j=1,2}\,
\bpow^{j}_{\Delta^{(j)}}
\]
\[
= \sum_{\lambda\atop \ell(\lambda)\le N}\,
r^{{\bf n},{\bf q},{\bf t}}_\lambda(n)\prod_{i=1,2}s_\lambda(\bpow^{(i)})
\]
where
\be
 r^{{\bf n},{\bf q},{\bf t}}_\lambda(n)\,=\,\left(qe^{\beta n}\right)^{|\lambda|}e^{\beta f_2(\lambda)}\,\prod_{i=1}^k\,
\left(\frac{(q_it_i^n;t_i)_\lambda}{\left(1-q_i\right)^{d}}\right)^{n_i}
\ee
 is a $\tau-$function of the 2-component BKP 
hierarchy and also of the semiinfinite TL hierarchy where $\bpow^{(j)}=\left(p^{(j)}_1,p^{(j)}_2,\dots\right)$ play the role 
of higher times. 
\end{Theorem}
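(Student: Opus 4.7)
The plan is to reduce Theorem \ref{Th-KP} to Proposition \ref{tau=Hurwitz-sums} together with the standard theory of hypergeometric TL and 2-component BKP tau functions, by displaying the prefactor $r^{{\bf n},{\bf q},{\bf t}}_\lambda(n)$ as a generalized Pochhammer symbol in the discrete variable $n$.

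The first equality, relating $\tau^{2,2}$ to the weighted Hurwitz sums $L^{2,2}_{d,N}$, is Proposition \ref{tau=Hurwitz-sums} specialized to $\textsc{e}=\textsc{f}=2$. Expanding each of the two Schur functions $s_\lambda(\bpow^{(1)})$ and $s_\lambda(\bpow^{(2)})$ via the character map (\ref{char-map-introduction}), and each ratio of Schur functions in $r_\lambda$ via (\ref{ratio-of-s-s=ratio-of-chi}), turns the sum over $\lambda$ into the Frobenius expression (\ref{Hurwitz}) for $H^{2,\,2+b+{\bf C}}_{d,N}$ weighted by $W_i({\bf c}^i)W_i^*({\bf{\bar c}}^i)$ from (\ref{W-i})--(\ref{W-i-*}); the sums over $b$ and over the occupation numbers are then collected into $L^{2,2}_{d,N}$.

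The crucial step is to recognize $r^{{\bf n},{\bf q},{\bf t}}_\lambda(n)$ as a content product $\prod_{(i,j)\in\lambda} r(n+j-i)$ for the one-variable function
\[
r(x) \,=\, \frac{q\,e^{\beta x}}{\prod_{i=1}^{k}(1-q_i)^{n_i}}\,\prod_{i=1}^{k}(1-q_i t_i^{x})^{n_i}.
\]
Indeed, the $e^{\beta x}$ piece produces $e^{\beta f_2(\lambda,n)} = e^{\beta f_2(\lambda)}e^{\beta n|\lambda|}$ by Example IV, each factor $(1-q_i t_i^{x})^{n_i}$ produces $((q_i t_i^n;t_i)_\lambda)^{n_i}$ by Example II combined with Remark \ref{fg}(2), the constant factor $q^{|\lambda|}$ is a content product for the constant function, and the normalization $(1-q_i)^{-n_i|\lambda|}$ may equally be treated as such or absorbed by the scaling freedom of Remark \ref{scaling}. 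By Remark \ref{fg}(1) the product of content products is again a content product, so the whole $r^{{\bf n},{\bf q},{\bf t}}_\lambda(n)$ is of the required form.

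With $r_\lambda(n)$ in content-product form, the semiinfinite TL tau function property of $\tau^{2,2}$ in the variables $(n,\bpow^{(1)},\bpow^{(2)})$ is immediate from the general hypergeometric TL construction (\ref{semiinfinteTL-sums}) proved in \cite{KMMM, OS-2000}; the length cut-off $\ell(\lambda)\le N$ corresponds to taking the origin of the semiinfinite lattice at a zero of $r$. The 2-component BKP assertion is obtained fermionically: the same content product defines a diagonal element $e^{\sum_n \ln r(n):\psi_n\psi^\dag_n:}$ of $\widehat{GL}_\infty$, which also lies in the spin (isotropic) subgroup underlying the Kac--van de Leur 2-BKP hierarchy \cite{KvdLbispec}, and conjugating it by the two-color vertex operators for $\bpow^{(1)},\bpow^{(2)}$ reproduces the Schur expansion, along the lines of the one-component BKP argument of \cite{OST-I}. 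The main obstacle I foresee is precisely this 2-BKP step: the TL statement follows by direct citation, whereas the 2-BKP statement requires verifying that the diagonal operator preserves the isotropic bilinear form of the KvdL construction and intertwines correctly with its two sets of vertex operators.
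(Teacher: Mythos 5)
Your proposal is correct and follows essentially the same route as the paper: the paper's proof likewise reduces the first equality to Proposition \ref{tau=Hurwitz-sums} and the tau-function property to the hypergeometric constructions (\ref{TL-sums}), (\ref{BKP-sums}) by choosing $r(x)=e^{\beta x}\prod_{i=1}^k(1-q_it_i^x)^{n_i}$, whose content product gives $e^{\beta f_2(\lambda,n)}\prod_i((q_it_i^n;t_i)_\lambda)^{n_i}$. Your extra bookkeeping of the constant prefactors (via Remark \ref{scaling}) and your explicit fermionic justification of the $2$-component BKP claim only spell out what the paper delegates to the citations of \cite{OST-I}, \cite{KvdLbispec}.
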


\begin{proof}. The statements of the Theorems directly follow from the consideration in the previous 
sections, see Proposition \ref{tau=Hurwitz-sums} and formulae (\ref{BKP-sums}) and (\ref{TL-sums}) where we choose
\be
r(x)\,=\,e^{\beta x}\,\prod_{i=1}^k\,\left(1-q_it_i^x \right)^{n_i}
\ee
This choice provides
\be
r_\lambda(x)\,=\,e^{\beta f_2(x)}\,\prod_{i=1}^k\,\left((q_it_i^x;t_i)_\lambda \right)^{n_i}
\ee
\end{proof}

Let us mark that if we introduce 
\be\label{}
F({\bf q,t,n};\beta, x)=\sum_{m>0}\,e^{\beta m}\,\prod_{i=1}^k\,\left(1-q_it_i^{m} \right)^{n_i}\,x^m
\ee
then
\[
\tau^{2,2}\left(N,n,\beta,\{n_i,q_i^{a_i},q_i\}\,|\,X,Y\right)\,
=\,\det\,\left[ F({\bf q,t,n};\beta, x^{(1)}_i x^{(2)}_j) \right]_{i,j=1,\dots,N}
\,,\quad  p^{(j)}_m=\sum_{i=1}^N\,\left(  x^{(j)}_i \right)^m
\]

\subsection{Particular cases \label{Particular cases}}

We consider the sums of Hurwitz numbers which correspond to the following options:
\begin{itemize}
\item $b,d,k\in\mathbb{Z}$, where $d>0$, $b,k\geq 0$;
\item integers $N_1,\dots,N_k \ge 0$ ;
\item  integers $L_1,\dots,L_k\ge 0$ ;
\item integers $c_j^{i} \ge 0$,  where $i=1,\dots,k$, $\,j=1,\dots, d^*$ ;
\item partitions $\Delta^{i}_j$, where $i=1,\dots,k$, $\,j=1,\dots, d^*$ and $|\Delta^i_j|=d$;
\item partitions $\Delta^{(1)},\dots,\Delta^{(\textsc{f})}$, where $|\Delta^{(i)}|=d$;

\end{itemize}

\vspace{2ex}

Denote by
$$T^{\textsc{e}}_{d,N}(b|N_1,\dots,N_k|L_1,\dots,L_k|
\Delta^{(1)},\dots,\Delta^{(\textsc{f})})=$$

\be\label{special-N-l^*}
\sum \frac{1}{\prod_{1\le i \le k\atop 1\ge j\ge d^*} c^i_j !} H^{\textsc{e},\textsc{f}+b+\sum_i|{\bf c}^i|}_{d,N}(\Gamma_1,\dots,\Gamma_b,
c^1_1,\dots,c^1_{d^*},
\dots,c^k_1,\dots,c^k_{d^*},
\Delta^{(1)},\dots,\Delta^{(\textsc{f})}),
\ee
where the sum is taken over partitions  $\Delta_i$ of the weight $d$ and over all partitions $\Delta^i_j$ of the same weight
that are not equal to $(1,\dots,1)$, such that
\begin{itemize}
\item $\Gamma_1=\dots=\dots,\Gamma_b=[2,1\dots,1]$;
\item  $\sum\limits_{j=1}^{d^*}\, c^i_j\,=\,N_i$, \quad   i=1,\dots,k
\item $\sum\limits_{j=1}^{d^*}\, c^i_j\ell(\Delta^i_j))=L_i$, \quad   i=1,\dots,k\,.
\end{itemize}

We recall that for $N\ge d$ the numbers $H^{\textsc{e},p}_{d,N}$ are the Hurwitz numbers of the $d-$fold
coverings with $p$ branch points of the surface with Euler characteristic $\textsc{e}$. 

\vspace{2ex}

Using the Example 4 in Subsection \ref{Weighted-sums-Section}
we obtain the following corollary 
of the Theorem \ref{Th-BKP} we get
\bc
$$
(-1)^k\sum_{n_1,\dots,n_k < 0}\,\prod_{i=1}^k\frac{(1-z_i)^{-n_i}}{n_i}\,\tau^{1,1}
\left(N,n,\bpow,\,\beta\,|\,\{n_i,a_i \}\,\right)=
$$

\[
\sum_{\Delta  } \,e^{\beta|\lambda|n}\,\bpow_{\Delta} \,
\frac{\beta^b}{b!}\,\times
\]
\be\label{T-BKP}
T^{(1)}_{d,N}(b|N_1,\dots,N_k|L_1,\dots,L_k|
\Delta^{(1)},\dots,\Delta^{(2)})\,\prod_{i=1}^k\,\left(z_i^{-N_i}-1\right)\,
(-1)^{N_i}
\Gamma(N_i)\,(a_i+n)^{L_i}
\ee
where $z_i$ and $a_i$ , $i=1,\dots,k$ are free parameters and where
\[
 \tau^{1,1}
\left(N,n,\bpow,\,\beta\,|\,\{n_i,a_i \}\,\right)=
\sum_{\lambda\atop\ell(\lambda)\le N}\,e^{\beta f_2(\lambda,n)}\,\prod_{i=1}^k\,
\left((a_i+n)_\lambda \right)^{n_i}\,s_\lambda(\bpow)
\]
is the BKP  tau function.

\ec

By the Theorem \ref{Th-KP}
\bc
$$
(-1)^k\sum_{n_1,\dots,n_k < 0}\,\prod_{i=1}^k\frac{(1-z_i)^{-n_i}}{n_i}\,\tau^{2,2}
\left(N,n,\bpow^{(1)},\bpow^{(2)}\,\beta\,|\,\{n_i,a_i \}\,\right)=
$$

\[
\sum_{\Delta^{(1)},\Delta^{(2)}  } \,e^{\beta|\lambda|n}\,\bpow^{(1)}_{\Delta^{(1)}}\bpow^{(2)}_{\Delta^{(2)}} \,
\frac{\beta^b}{b!}\,\times
\]
\be\label{T-KP}
T^{\textsc{e}}_{d,N}(b|N_1,\dots,N_k|L_1,\dots,L_k|
\Delta^{(1)},\Delta^{(2)})\,\prod_{i=1}^k\,\left(z_i^{-N_i}-1\right)\,
(-1)^{N_i}
\Gamma(N_i)\,(a_i+n)^{L_i}
\ee
where $z_i$ and $a_i$ , $i=1,\dots,k$ are free parameters and where
\[
 \tau^{2,2}
\left(N,n,\bpow^{(1)},\bpow^{(2)}\,\beta\,|\,\{n_i,a_i \}\,\right)=
\sum_{\lambda\atop\ell(\lambda)\le N}\,e^{\beta f_2(\lambda,n)}\,\prod_{i=1}^k\,
\left((a_i+n)_\lambda \right)^{n_i}\,s_\lambda(\bpow^{(1)})s_\lambda(\bpow^{(2)})
\]
is the 2KP (or, the same, of the semiinfinite TL) tau function.

\ec

\paragraph{Alternating sums.}

\vspace{2ex}

Here we present other combination of Hurwitz numbers weighted by $\pm 1$.

We consider the sums of Hurwitz numbers which correspond to the following options:
\begin{itemize}
\item $b,d,k,s\in\mathbb{Z}$, where $d>0$, $b,k,s\geq0$;
\item $l_1,\dots,l_k\in\mathbb{Z}$, where $0\le l_i\le d$;
\item $l_1^*,\dots,l^*_s\in\mathbb{Z}$, where $0< l^*_i \le d$ (the first uniquality is strict);
\item partitions $\Delta^{(1)},\dots,\Delta^{(\textsc{f})}$, where $|\Delta^{(i)}|=d$;
\item partitions $\Delta_1,\dots,\Delta_k$, where $|\Delta_i|=d$;
\item $s$ sets of partitions $\Delta_1^{i},\dots,\Delta_{m_i}^{i}$, $i=1,\dots,s$, where $|\Delta_j^i|=d$ for each $i,j$.
\end{itemize}

\vspace{2ex}

Denote by
$$S^{\textsc{e}}_{d,N}(b|l_1,\dots,l_k|l_1^*,\dots,l^*_s|
\Delta^{(1)},\dots,\Delta^{(\textsc{f})})=$$

\be\label{special'''''}
\sum (-1)^{\varepsilon} H^{\textsc{e}}_{d,N}(\Gamma_1,\dots,\Gamma_b,
\Delta_1,\dots,\Delta_k,\Delta^1_1,\dots,\Delta^1_{m_1},
\dots,\Delta^s_1,\dots,\Delta^s_{m_s},
\Delta^{(1)},\dots,\Delta^{(\textsc{f})}),
\ee
where the sum is taken over partitions  $\Delta_i$ of the weight $d$ and over all partitions $\Delta^i_j$ of the same weight
that are not equal to $(1,\dots,1)$, such that
\begin{itemize}
\item $\Gamma_1=\dots=\dots,\Gamma_b=[2,1\dots,1]$;
\item $d-\ell(\Delta_i)=l_i$, \quad i=1,\dots,k;
\item $\sum\limits_{j=1}^{m_i}(d-\ell(\Delta^i_j))=l^*_i$, \quad   i=1,\dots,s;
\item $\varepsilon=m_1+\dots+m_s$.
\end{itemize}

Thus, the numbers
$S^{\textsc{e}}_{d,N}(b|1,\dots,1|1,\dots,1|\Delta^{(1)},\Delta^{(2)})$
are  the 2-Hurwitz numbers from \cite{Okounkov-2000} with profiles $\Delta^{(1)}$ and $\Delta^{(2)}$ in
two given points with additional $b+k+s$ simple ramification points.

\vspace{2ex}

It follows from (\ref{Hurwitz}) that $$S^{\textsc{e}}_{d,N}(b|l_1,\dots,l_k|l_1^*,\dots,l^*_s|
\Delta^{(1)},\dots,\Delta^{(\textsc{f})})$$ we can present as the sum, where summands correspond to Young diagram
$\lambda$ of weight $d$.

\paragraph{Hurwitz weighted sums (\ref{special'''''}) for the BKP hierarchy.} 
Let us write down the following particular case of the 
Theorem \ref{Th-BKP}. 
Construct now  generating function for $S^N_{\mathbb{RP}^2}(d|l_1,\dots,l_k|l_1^*,\dots,l^*_s|
\Delta)$. Put
\be\label{BKP-generating-Hurwitz-sums} \tau^{BKP}(N,n|a_1,\dots,a_k|b_1,\dots,b_s|{\bpow}_{\Delta})=
\sum\limits_B \,(qe^{\beta n})^d\frac{\beta^b}{b!}  \prod_{i=1}^s
\,b_i^{-d}\,(b_i+n)^{-l^*_i} \prod_{i=1}^k a_i^{-d}(a_i+n)^{-l_i}\,
S^N_{\mathbb{RP}^2}(B)\, p_{\Delta}
\ee
where the sum is taken over all $B=(d|l_1,\dots,l_k|l_1^*,\dots,l^*_s|\Delta)$. Here $p_{\Delta}=p_{d_1}p_{d_2}\cdots$
with $\Delta=(d_1,d_2,\dots)$ and $\Delta=(d_1,d_2,\dots)$, $d_1+d_2+\dots =d$.

\bc \label{corollary-BKP}
The function $\tau^{BKP}(N,n|a_1,\dots,a_k|b_1,\dots,b_s|\bpow)$ is a $\tau-$ functions for
BKP hierarchy for any complex numbers $(a_1,\dots,a_k|b_1,\dots,b_s)$ and integer $n>0$.
\ec

\paragraph{Hurwitz weighted sums (\ref{special'''''}) for the 2KP (the same, for the semiinfinte TL) hierarchy.}

Consider
\bea\label{TL-generating-Hurwitz-sums} \tau^{TL}(M,n|q,\beta|a_1,\dots,a_k|b_1,\dots,b_s|
{\bpow}, {\bbpow})=\\
\sum\limits_B\,(qe^{\beta n})^d\frac{\beta^b}{b!}
\prod_{i=1}^k \,a_i^{-d}(a_i+n)^{-l_i} \prod_{i=1}^s \,b_i^{-d}\,(b_i+n)^{-l^*_i}
S_{\mathbb{CP}^1}^{n-M}(B|\Delta^{(1)},\Delta^{(2)})\, p_{\Delta^{(1)}} {\bar p}_{\Delta^{(2)}}
\eea
where $n>M$ and the sum is taken over all $B=(d|l_1,\dots,l_k|l_1^*,\dots,l^*_s|\Delta^{(1)},\Delta^{(2)})$.
Here $p_{\Delta^{(i)}}=p_{d_1^{(i)}}p_{d_2^{(i)}}\cdots$
with $\Delta^{(i)}=(d_1^{(i)},d_2^{(i)},\dots)$, $d_1^{(i)}+d_2^{(i)}+\dots =d$,  $i=1,2$.

Here we basically review the result of \cite{HO-2014} with certain modifications, namely, we need not infinite
but semiinfinite Toda lattice to compare with our main result in the next section. We do not consider the combinatorial
interpretation of hypergeometric tau functions in terms of counting paths problem in Cayley graph related to the symmetric group
worked out in \cite{Harnad-2014}, \cite{HO-2014}.

\bc The function $\tau^{TL}(M,n|a_1,\dots,a_k|b_1,\dots,b_s| \bpow,
\bbpow)$ is a $\tau$ functions for semiinfinte 2DToda hierarchy for any complex numbers
$(a_1,\dots,a_k|b_1,\dots,b_s)$.
\ec
It directly follows from the Theorem \ref{Th-KP}.

\br
The special case  $s=0$ was pointed out in \cite{AMMN-2014}.
\er

\section{Transformation of Hurwitz $\tau$-functions for the semiinfinte 2DToda hierarchy to $\tau$-functions for BKP hierarchy
\label{transformations-section}}

The sum (\ref{BKP-sums}) is an example of the BKP hypergeometric tau function \cite{OST-I} (Hirota equations
for the BKP tau functions may be found in Appendix \ref{N-n-BKP-Hirota}).
It may be obtained from
the  hypergeometric tau function of the semi-infinite Toda lattice $n\ge M$ (here a given $M$ is the origin of the lattice):
\[
 \frac{\partial\phi_n}{\partial p_1\partial {\bar p}_1}=
 { r}'(n)e^{\phi_{n-1}-\phi_n}-{ r}'(n+1)e^{\phi_n-\phi_{n+1}}\,,\quad
 e^{-\phi_n}=\frac{\tau^{}_{ r'}(M;n+1,\bpow,\bbpow)}{\tau_{ r'}(M;n,\bpow,\bbpow)}\frac{g(n)}{g(n+1)}
\]
where ${ r}'(n)=r(n)\delta(n)$ ($\delta(M)=0$, $\delta(n)=1$ otherwise). The multiplication by $\delta$ provides
the restriction of the summation region by the condition $r'_\lambda(n)$ for $\ell(\lambda)\le n-M$ for the tau function
of the Toda lattice, this we see from the definition of $r_\lambda(n)$.
Fixing $n$ and choosing $N=n-M$ we obtain
\be\label{relation}
\tau_r^{\rm BKP}(N,n,\bpow) \,=\, \left[
e^{\frac 12 L_\infty} \cdot \tau_{r'}^{\rm TL}(n-N;n,\bpow,\bbpow)
 \right]_{\bbpow =0}
\ee
where $L_\infty$ is the following Laplacian operator
\be
L_\infty \,=\,
\sum_{m\ge 1}\, m \frac {\partial^2}{\partial {\bar p}_m^2}\,+
\,2\sum_{m\ge 1,{\rm odd}} \,m \frac{\partial}{\partial {\bar p}_{m}}
\ee

The operator $e^{\frac 12 L_\infty}$ and the evaluation at $\bbpow=0$ 'eliminate' one
Schur function in each term of (\ref{TL-sums}).

The proof of (\ref{relation}) follows from
 \[
\sum_{\lambda\in\Pa}\,s_\lambda(\bpow)\,=\,
e^{\frac 12\sum_{m=1}^\infty\,\frac 1m p_m^2\,+\,\sum_{m=1}^\infty \,  p_{2m-1}}
 \]
and
\[
 \left[s_\mu({\tilde\partial})\cdot s_\lambda(\bpow) \right]_{\bpow =0}\,=\,\delta_{\mu,\lambda}
\]
which may be derived from Examples in Chapter I section 5  of \cite{Mac}. Here $s_\lambda(\tilde\partial)$
denotes the Schur function as the function defined by (\ref{Schur-pol}) where each $p_m$ is replaced by
$m\frac{\partial}{\partial p_m}$.

\section{Matrix integrals as generating functions of Hurwitz numbers and of Hurwitz generating series
\label{Matrix-integrals}}

We want to present Hurwitz generating series (\ref{hyp-tau-e-f}) in form of matrix integrals. 
The idea is that various integrals of products of the 2KP and BKP tau functions of matrix argument
\footnote{
If $\bpow=(p_1,p_2,\dots)$ and $p_m=\tr M^m$ where $M$ is a matrix, we may write $\tau=\tau(M):=\tau\left(\bpow(M) \right)$ and call it
tau function of matrix argument} result in the series (\ref{hyp-tau-e-f}). 

\paragraph{One-matrix model and Hurwitz numbers.}
To begin with we recall that the partition function of such standard matrix models as the two-matrix model, the model
of complex matrices (complex Ginibre ensemble) and model of normal matrices may be written in form of the following
perturbation series in coupling constants $\bpow^{(1)},\bpow^{(2)}$
\be\label{MM-Schur-series}
 \sum_{\lambda\atop \ell(\lambda)\le N}\,\frac{s_\lambda(I_N)}{s_\lambda(\bpow_\infty)}\,
 s_\lambda(\bpow^{(1)})s_\lambda(\bpow^{(2)})
\ee
see \cite{HO-2MM}, \cite{O-2003}, \cite{OShiota-2004}. Taking into account explanations above, and
\[
 s_\lambda(I_N)\,=\, \chi_\lambda(1)N^d\,
 \left(1+\sum_{\Delta\neq 1^d } |C_{\Delta}|\frac{\chi_\lambda(\Delta)}{\chi_\lambda(1)} N^{\ell(\Delta)-d} \right)
\]
This series is of form (\ref{hyp-tau-e-f}) where $k=1$ and where $q_1=t^N_1\to 1$ which, as we know, 
generates sums of Hurwitz numbers for the sphere  with three ramification points where two profiles are arbitrary and fixed
and summation is ranged over all profiles in the third point whose length is a given  number. At first this fact was marked
in \cite{AMMN-2014}. 

Our remark to this observation is as follows. 
Series (\ref{MM-Schur-series}) also describes the celebrated one-matrix model in case 
$p^{(2)}_2\neq 0,\, p^{(2)}_i=0,\,i>2$, for the Schur function seires see \cite{HO-scalar}. 
Thus we obtain

\begin{itemize}

\item
The one-matrix model generates sums of all Hurwitz numbers for the sphere 
under the following conditions: the profile at $\infty$ is a given partition, the profile at $0$ 
is a given partition of type $1^{d_1}2^{d_2}$,  the length of the profile at the third point is a given number.

\end{itemize}

\br
To compare results,
one should have in mind that in matrix model studies they typically send $p_i \to N p_i,\, {\bar p}_i \to N {\bar p}_i$ 
in the $N\to\infty$ limit which yields the following behavior in $N$:
\[
 s_\lambda(\bpow)\,=\, \chi_\lambda(1)N^d\,
 \left(p_1^d+\sum_{\Delta\neq 1^d } |C_{\Delta}|
 \frac{\chi_\lambda(\Delta)}{\chi_\lambda(1)}N^{\ell(\Delta)-d}\bpow_{\Delta} \right)
\]
\er

{\bf Example}.
The partition function for the Hermitian  one-matrix model which is related to the 'triangulation'
by $k$-gons and describes a model of two-dimensional gravity is
\[
Z(N,g,g_k)=\int dM e^{-N\tr(\frac g2 M^2+\frac{g_k}{k} M^k)} = 
\frac{s_\lambda(I_N)}{s_\lambda(\bpow_\infty)}\,
s_\lambda\left(0,p_2=-\frac{1}{2Ng},0,\dots\right)s_\lambda\left(0,\dots,0,p_k=-Ng_k,0,\dots\right)
\]
\be
=\,\sum_{F,L,V}\,Z_{F,L,V} \,N^{F-L+V}\,g^{-L}\,g_k^V
\ee
where $F,L,V$ are the number of faces, lines and vertices of the Feynman fat graph which contributes to $Z_{F,L,V}$.
\footnote{ According to the Feynman rules for the one matrix
model we have: (a) to each propagator (double line) is associated a
factor $1/(Ng)$ (which is $p_2^{(1)}$ in our notations) (b) to each
four legged vertex is associated a factor $(-Ng_4)$ (which is $Np_4^{(2)}$
in our notations) (c) to each closed single line is associated a
factor $N$. Therefore we may say that the factor $\frac{s_\lambda(I_N)}{s_\lambda(\bpow_\infty)}$
in (\ref{MM-Schur-series}) is responsible for closed lines, the factor
$s_{\lambda}({\bf t}=0,Np_2^{(1)},0,\dots)$ is responsible for
propagators and the factor $s_{\lambda}(N\bpow^{(2)})$ is
responsible for vertices.} 
(We choose the normalization of the matrix integral in such a way
that $Z(N,g,g_k)$ is equal to $1$ when $g_k=0$.)
Thus we obtain
\be
Z_{F;L;V}\,=\,\sum_{\Delta \atop \ell(\Delta)=F}\,H_{\mathbb{CP}^1} \left(\Delta, (2^L), (k^V)\right)
\ee
where $|\Delta|=2L=kV$, otherwise both sides vanish.
After taking the logariphm we abotain the same for the connected Feynman graphs and the connected Hurwitz numbers:
\be
Z_{F;L;V}^{\rm connected}\,=\,\sum_{\Delta \atop \ell(\Delta)=F}\,H^{\rm connected}_{\mathbb{CP}^1} 
\left(\Delta, (2^L), (k^V)\right)
\ee

In case there is a set of coupling constants $g_3,g_4,\dots$ instead of a single one, $g_k$, we obtain 
\[
 Z(N,g,g_3,\dots)=  \sum_{F,L,V}\,Z_{F,L,V} \,N^{F-L+\sum_{k>2} V_k}\,g^{-L}\,\prod_{k>2}\,g_k^{V_k}
\]
where $Z_{F;L;V_3,V_4,V_5,\dots}$ is equal to the sum of the 3-point Hurwitz numbers,
\be
Z_{F;L;V_3,V_4,V_5,\dots}\,=\,\sum_{\Delta \atop \ell(\Delta)=F}\,
H_{\mathbb{CP}^1} \left(\Delta, \Delta_1, \Delta_2\right)\,,\qquad
\Delta_1=(2^L)\,,\,\,\Delta_2=\left( 3^{V_3}4^{V_4} 5^{V_5}\cdots \right)
\ee
and the Euler characteristic of related fat graphs is equal to the alternating sum of the lengths
of profiles: $\textsc{e}_{\rm fat\,graph}=\ell(\Delta)-\ell(\Delta_1)+\ell(\Delta_2)$.
We also obtain $|\Delta|=|\Delta_1|=|\Delta_2|$.

In \cite{AMMN-2014} some other interesting examples of relations between matrix models and sums of Hurwitz 
numbers were mentioned. For example it was presented a $2n$-matrix model with expansion (\ref{MM-Schur-series}) where
$\frac{s_\lambda(I_N)}{s_\lambda(\bpow_\infty)}$ was replaced by the $n$-th power of this factor.

Below we will write down matrix integrals which generate Hurwitz numbers themselves rather than weighted sums
(as (\ref{MM-Schur-series}) does),  though we also present some generating series for the sums.

\paragraph{Notation. Useful relations.}

We shall exploit the known formulae for integrations of Schur functions over the
unitary group and over complex matrices. Earlier they were used in \cite{O-2003} to clarify some links between
matrix models and integrable hierarchies. 

As in \cite{O-2003} instead of power sums written by small bold characters (like $\bpow$) sometimes where it is suitable 
we shall use the matrix arguments written by large character. Say, $s_\lambda(X)$ and $\tau(X)$ are respectively equal 
to  $s_\lambda(X):=s_\lambda\left(\bpow(X)\right)$ and  $\tau(X):=\tau \left(\bpow(X)\right)$ where
 $\bpow(X)=(p_1(X),p_2(X),\dots)$, where $p_m(X)=\tr X^m$.

 We use the following notations
 \begin{itemize}
  \item $  d_*U $ is the normalized Haar measure on $\mathbb{\mathbb{U}}(N)$: $\int_{\mathbb{U}(N)}d_*U =1$
  
  \item $Z$ is a complex matrix, $Z=UX(1+J)U^\dag$ (the Schur decomposition), where $X=\diag (z_i)$ is diagonal, 
  $J$ is strictly upper triangle, $U\in\mathbb{\mathbb{U}}(N)$
    $$
d\Omega^\texttt{C}(Z,Z^\dag)  =\,\pi^{-n^2}\,e^{-\tr \left(ZZ^\dag\right)}\,
\prod_{i,j=1}^N \,d \Re Z_{ij}d \Im Z_{ij} 
  $$
  \[
 = \,c_\texttt{Z} \,
d_*U\,||\,\prod_{N\ge i>j}\,|z_i-z_j|^2\,
\prod_{i=1}^N\,e^{-|z_i|^2}\,d^2z_i \,\left[e^{-\tr JJ^\dag} d^2 J_{ij}\right]  
  \]
  where the part related to the upper triangular factor in brackets is not important for our problems.
  
  \item $M$ is a normal matrix, $Z=UXU^\dag$, where $X=\diag (z_i)$ is diagonal,  
  $U\in\mathbb{\mathbb{U}}(N)$  
  $$  
  d\Omega^\texttt{N}(M,M^\dag)\,= \,\pi^{-n^2}\,e^{-\tr\left(MM^\dag \right)}\,
\prod_{i,j=1}^N d \Re M_{ij}d \Im M_{ij}  
  $$
  \[
   =\,c_\texttt{M}\, d_*U \,
\prod_{N\ge i>j}|z_i-z_j|^2\,\prod_{i=1}^N \,e^{-|z_i|^2}
\,d^2 z_i
  \]

  \item $H^{(1)}$ is a Hermitian matrix and $H^{(2)}$ is anti-Hermitian one, 
  $H^{(c)}=U^{(c)}X^{(c)}U^{(c)\dag}$, $X^{(c)}=\diag(x_i^{(c)})$,
  $U,U^{(c)}\in\mathbb{U}(N)$, $c=1,2$. Measure 
   $$
   d\Omega^\texttt{H}(H^{(1)},H^{(2)})= \,\int_{\mathbb{U}(N)}\,
e^{-\tr \left(H^{(1)} UH^{(2)}U^\dag\right)}d_*U\, \prod_{i\le j} 
d\Re H^{(1)} d\Im H^{(2)}\prod_{i<j} d\Im H^{(1)} d\Re H^{(2)}
  $$ 
  \[
  =\,c_\texttt{H}\, \prod_{c=1,2} d_*U^{(c)}
\prod_{N\ge i>j}(x^{(c)}_i-x^{(c)}_j)\prod_{i=1}^N e^{-x^{(1)}_i x^{(2)}_i}d x^{(1)}_i d  x^{(2)}_i
  \]
 \end{itemize}
where the constants $c_a$, $a=\texttt{C,N,H}$,  are chosen for normalization: $\int d\Omega_\rho^{(a)}=1$.
\br \label{notation-M-M^*} In what follows,
for unification and to save space,
 we shall use the notation $M$ and $M^*$ replacing the pairs $Z,Z^\dag$, $M,M^\dag$ and also $H^{(1)},H^{(2)}$.
In the last case the matrices $M$ and $M^*$ are not related by the Hermitian conjugation.
\er

These measures provides the relation
\be\label{s-s-N_lambda-1}
\int s_\lambda(M)s_\mu(M^*)\,d\Omega^a(M,M^*) = (N)_\lambda\delta_{\lambda,\mu}
\ee
where $a=\texttt{C},\texttt{N},\texttt{H}$. This relation was used in 
\cite{O-Acta},\cite{HO-2MM},\cite{O-2003},\cite{AMMN-2014},\cite{OShiota-2004}, for for models of Hermitian, complex 
and normal matrices.
\footnote{If we 
replace the factor $e^{\tr\left(MM^* \right)}$ in the measure 
$d\Omega^a$ by a hypergeometric tau function $\tau_r(N,MM^*,I_N)$, then the factor $(N)_\lambda$ in the right 
hand side of (\ref{s-s-N_lambda}) should be replaced by $\frac{1}{r_\lambda(N)}$ \cite{O-Acta}.}

By $I_N$ we shall denote the $N\times N$ unit matrix.
Then (for instance see \cite{Mac})
\bl \label{useful-relations'}
Let $A$ and $B$ be normal  matrices (i.e. matrices diagonalizable by unitary transformations). Then
\begin{equation}\label{sAUBU^+1}
\int_{\mathbb{U}(N)}s_\lambda(AUBU^{-1})d_*U=
\frac{s_\lambda(A)s_\lambda(B)}{s_\lambda(I_n)} \ ,
\end{equation}
For $A,B\in GL(N)$ we have
\begin{equation}\label{sAUU^+B'}
\int_{\mathbb{U}(n)}s_\mu(AU)s_\lambda(U^{-1}B)d_*U=
\frac{s_\lambda(AB)}{s_\lambda(I_N)}\delta_{\mu,\lambda}\,.
\end{equation}
Below ${\bf p}_{\infty}=(1,0,0,\dots)$. 
\begin{equation}\label{sAZBZ^+'}
\int_{\mathbb{C}^{n^2}} s_\lambda(AZBZ^+)e^{-\textrm{Tr}
ZZ^+}\prod_{i,j=1}^n d^2Z=
\frac{s_\lambda(A)s_\lambda(B)}{s_\lambda({\bf p}_{\infty})}
\end{equation}
and
\begin{equation}\label{sAZZ^+B'}
\int_{\mathbb{C}^{n^2}} s_\mu(AZ)s_\lambda(Z^+B) e^{-\textrm{Tr}
ZZ^+}\prod_{i,j=1}^nd^2Z= \frac{s_\lambda(AB)}{s_\lambda({\bf
p}_{\infty})}\delta_{\mu,\lambda}\,.
\end{equation}
\el
We recall that
$$ s_\lambda(I_N)=(N)_\lambda s_\lambda(\bpow_\infty)\,,\qquad\chi_\lambda(1)=|\lambda|!s_\lambda(\bpow_\infty) $$

Lemma \ref{useful-relations'} allows to pick up Hurwitz numbers from matrix integrals in many ways. 
Details may be found in the Appendix \ref{Matrix-integrals-appendix}
Below we describe the simplest examples of integrals of tau function which are not tau function.

We use the simplest (the so-called vacuum) 2-KP tau function 
\be\label{vac-tau-2KP'}
\tau_1^{\rm 2KP}(X,\bpow)\,:=\,\sum_\lambda \,s_\lambda(X)\,s_\lambda(\bpow)=e^{\tr V(X,\bpow)},\quad
\tau_1^{\rm 2KP}(X,\bpow_\infty)\,=e^{\tr X}
\ee
where
\[
 V(z,\bpow)=\sum_{m>0}\,
 \frac{z^mp_m}{m}
\]
$z$ may be a number and may be a matrix. 

We use the simplest BKP tau function 
\cite{OST-I}
\be\label{vac-tau-BKP'}
 \tau_1^{\rm BKP}(X)\,:=\,\sum_\lambda \,s_\lambda(X)\,=\,\prod_{N>i>j}\,(1-x_ix_j)^{-1}\,\prod_{i=1}^N\,(1-x_i)^{-1}
\ee
Then we have

{\bf Example}. $\mathbb{CP}^1$ with three ramification point. In particular we have the following integrals
generating Hurwitz numbers
\[  
\sum_{\lambda\atop\ell(\lambda)\le N} \frac{s_\lambda(\bpow^{(1)})
s_\lambda(\bpow^{(2)})s_\lambda(\bpow^{(3)})}{s_\lambda(\bpow_\infty)}\, =
\]
\[
= \,\int e^{\tr V(M_1 M_2,\,\bpow^{(3)})}\prod_{i=1,2} e^{\tr V(M^*_i,\,\bpow^{(i)})} d\Omega^a(M_i,M^*_i)
\]
\[
=\,\int e^{\tr V(\Lambda M,\,\bpow^{(1)})+ \tr V(M^*,\,\bpow^{(2)})} \,d\Omega^a(M,M^*)\,,\quad p_m^{(3)}=\tr \Lambda^m
\]
where $a=\texttt{C,N,H}$.

{\bf Example}. An analog of (\ref{MM-Schur-series}) for $\mathbb{RP}^2$ with three ramfication points with two 
arbitrary profiles at $0$ and at $\infty$ with fixed length in the third point:
\[
 \sum_{\lambda}\frac{s_\lambda(I_N)s_\lambda(\bpow^{(1)})s_\lambda(\bpow^{(2)})}{\left(s_\lambda(\bpow_\infty) \right)^2}
\]
\[
 = \,\int \,\tau^{\rm BKP}_1\left( M_1 M_2 \right)  \,\prod_{i=1,2} \,
  e^{V(\tr M^*_i,\,\bpow^{(i)})}\,d\Omega^a(M_i,M^*_i)\,\quad a=\texttt{C,N,H}
\]
\[
=\, \int \, e^{\tr\left(\Lambda M_1 M_2 \right) }\,\tau^{\rm BKP}_1(M_1^*)\,e^{\tr V(M^*_2,\,\bpow)} \,
\prod_{i=1,2}\,d\Omega^{\texttt{C}}(M_i,M^*_i)\,, \qquad p^{(2)}_m =\tr \Lambda^m
\]

\section*{Acknowledgements}

A.O. was supported by RFBR grant 14-01-00860 and by V.E.Zakharov's scientific school (Leading schientific schools). He thanks
John Harnad, Leonid Chekhov and Andrei Mironov for explanations related to their works and John Harnad, Anton Zabrodin and 
Johan van de Leur for fruitful discussions. We thanks Sergei Loktev for a useful remark.
Our special grates to Chekhov for the organization of the workshop on Hurwitz numbers (Moscow, May 2014) which inspired us to 
do this work.
The work of S.N.
was partially supported by Laboratory of Quantum Topology of Chelyabinsk State University
 (Russian Federation government grant 14.Z50.31.0020), by RFBR grants 13-02-00457  and NSh-5138.2014.1.

\appendix

\section{Appendices}

\subsection{Macdonald polynomials \label{MacdPol}}

This Appendix is a result of a discussion with John Harnad.

One can write down the scalar product where Macdonald polynomials $P_\lambda(q^a,q;\bpow)$ are orthonormal, by the integral
over all power sums variables $\bpow$ as follows
\be
\l f,g \r = \int_{\mathbb{C}^\infty}\,f(\bpow)g(\bpow^*)\,\prod_{m=1}^\infty\, e^{-\frac 1m |p_m|^2 p_m(a,q)}\,
\frac{p_m(a,q)}{2\pi i m }\,{dp_m\wedge dp_m^*}
\ee
where $p_m^*$ is the complex conjugate to $p_m^*$.  In this basis
\[
 \l p_\Delta p_{\Delta'} \r = \frac{1}{w_\Delta(a,q)}\delta_{\Delta,\Delta'}\,,\quad
 w_\Delta(a,q)=\frac{p_\Delta(a,q)}{z_\Delta}
\]
Exactly this ratio appears in the character expansion formula
\be
s_\lambda(\bpow(a,q)) = \sum_\Delta \chi_\lambda(\Delta) w_\Delta(a,q)
\ee
Also one can write
\be
\sum_{\Delta\in\Pa} \,\frac{p_\Delta(a,q)}{z_\Delta}P_\Delta(\bpow)P_\Delta(\bbpow)=
e^{\sum_{m>0}\frac 1m p_m {\bar p}_m p_m(a,q)}
=\sum_{\Delta\in\Pa}\frac{p_\Delta(a,q)}{z_\Delta} \sum_{\lambda\in\Pa} s_\lambda(\bpow\bbpow)\chi_\lambda(\Delta)
\ee
where $\bpow\bbpow$ denotes the set $(p_1{\bar p}_1,p_2{\bar p}_2,\dots)$.

Equation (\ref{bpow(a)}) corresponds to the case $q\to 1$ where Macdonald polynomials convert to Jack ones.

The relation of of hypergeometric tau functions to the quantum integrable systems is unclear.
The combinatorial and geometric interpretations of hypergeometric tau functions parametrized by pairs $a_i,q_i$ in the
TL case will be considered in \cite{HO-future}.

 \subsection{Hirota equations for the BKP tau function with two discrete time variables.\label{N-n-BKP-Hirota}}

 The BKP hierarchy we are interested in was introduced in \cite{KvdLbispec}. It was used
 to construct various matrix models \cite{L1}, \cite{OST-I}, \cite{O-2012}.
 Hirota equations for the BKP hierarchy of Kac-van de Leur were presented in \cite{KvdLbispec}.
 However in our case we need more general version which includes both discrete variables
 $N$ and $n$, see \cite{OST-I}.
 The BKP tau function we need has the following form
 \be\label{N-n-BKP}
 \tau^{\rm BKP}(N,n,\bpow |g)=\l N+n|e^{\sum_{m>0} \frac 1m {\bar p}_m J_m} g |n\r
 \ee
 where Clifford algebra element $g$ may be considered as an element of $\mathbb{O}(2\infty +1)$ group which
 specifies the choice of the BKP tau function,
 \[
  J_m=\sum_{i\in\mathbb{Z}}\,:\psi_i\psi^\dag_{i+m}:
 \]
  are Fourier modes of current operators, see details in \cite{OST-I}. Hirota equations for tau function (\ref{N-n-BKP}) may be
  obtained by a certain specification of the Hirota equations for the two-sided BKP tau function
  \[
 \tau^{\rm BKP}(N,n,\bpow,\bbpow |g)=\l N+n|e^{\sum_{m>0} \frac 1m p_m J_m} g e^{\sum_{m>0} \frac 1m p_m J_{-m}}|n\r
  \]
see \cite{OST-I},  which in our notations are
 \bea\label{Hirota-two-sided-BKP}
  \oint\frac{dz}{2\pi i}z^{N'+n'-N-n-2}e^{V(\bpow'-\bpow,z)}
  \tau(N'-1,n',\bpow'-[z^{-1}],{\bbpow}')
  \tau(N+1,n,\bpow+[z^{-1}],{\bar \bpow}) \nonumber\\
+ \oint\frac{dz}{2\pi i}z^{N+n-N'-n'-2}e^{V(\bpow-\bpow',z)}
  \tau(N'+1,n',\bpow'+[z^{-1}],{\bar \bpow}')
  \tau(N-1,n,\bpow-[z^{-1}],{\bar \bpow}) \nonumber\\
= \oint\frac{dz}{2\pi i}z^{n'-n}e^{V({\bar \bpow}'-{\bar \bpow},z^{-1})}
  \tau(N'-1,n'+1,\bpow',{\bar \bpow}'-[z])
  \tau(N+1,n-1,\bpow,{\bar \bpow}-[z]) \nonumber \\
+ \oint\frac{dz}{2\pi i}z^{n-n'}e^{V({\bar \bpow}'-{\bar \bpow},z^{-1})}
  \tau(N'+1,n'-1,\bpow',{\bar \bpow}'+[z])
  \tau(N-1,n+1,\bpow,{\bar \bpow}+[z]) \nonumber\\
+ \frac{(-1)^{n'+n}}{2}(1-(-1)^{N'+N})
  \tau(N',n',\bpow',{\bar \bpow}')\tau(N,n,\bpow,{\bar \bpow})
\eea
see also \cite{LeurO-2014}. Here $\bpow=(p_1,p_2,\dots)$, $\bpow'=(p'_1,p'_2,\dots)$, $\bbpow=({\bar p}_1,{\bar p}_2,\dots)$,
$\bbpow'=({\bar p}'_1,{\bar p}'_2,\dots)$, and
\[
 V(z,\bpow)=\sum_{m>0} \frac 1m z^m p_m
\]
The notation $\bpow +[z^{-1}]$ denotes the set $\left( p_1+z^{-1},p_2+z^{-2}, p_3+z^{-3},\dots \right)$.

 \br
Actually up to some simple factor the two-sided BKP tau function of \cite{OST-I} coincides with the two-component
BKP tau function of \cite{KvdLbispec} and Hirota equations (\ref{Hirota-two-sided-BKP})
basically coincide with the Hirota equations for the two-component
BKP, see Appendix in \cite{LeurO-2014}.
 \er

 To obtain Hirota equations for (\ref{N-n-BKP}) we chose ${\bar \bpow}={\bar \bpow}'=0$.

 For $n'=n+1$, we obtain (see \cite{OST-I})
 \bea\label{Hirota-N-n-BKP-OST}
  \oint\frac{dz}{2\pi i}z^{N'-N-1}e^{V(\bpow'-\bpow,z)}
  \tau(N'-1,n+1,\bpow'-[z^{-1}]|g)
  \tau(N+1,n,\bpow+[z^{-1}]|g) \nonumber\\
+ \oint\frac{dz}{2\pi i}z^{N-N'-3}e^{V(\bpow-\bpow',z)}
  \tau(N'+1,n+1,\bpow'+[z^{-1}]|g)
  \tau(N-1,n,\bpow-[z^{-1}]|g) \nonumber\\
=
  \tau(N'+1,n,\bpow'|g)
  \tau(N-1,n+1,\bpow|g)
- \frac{1}{2}(1-(-1)^{N'+N})
  \tau(N',n+1,\bpow'|g)\tau(N,n,\bpow|g)
\eea
For $n'=n$, we obtain Hirota equations as in \cite{KvdLbispec}
\bea\label{Hirota-N-BKP(deLeur)}
  \oint\frac{dz}{2\pi i}z^{N'-N-2}e^{\xi(\bt'-\bt,z)}
  \tau(N'-1,n,\bt'-[z^{-1}])\tau(N+1,n,\bt+[z^{-1}]) \nonumber\\
+ \oint\frac{dz}{2\pi i}z^{N-N'-2}e^{\xi(\bt-\bt',z)}
  \tau(N'+1,n,\bt'+[z^{-1}])\tau(N-1,n,\bt'-[z^{-1}]) \nonumber\\
= \frac{1}{2}(1-(-1)^{N'+N})\tau(N',n,\bt')\tau(N,n,\bt)
\eea

Let us write down some of them. Taking $N'=N+1$ and all $p_i=p_i',\,i\neq 1$ in (\ref{Hirota-N-n-BKP-OST})
and picking up the terms linear in $p'_1-p_1$ we obtain
\bea
\frac 12 \tau(N,n+1,\bpow)\frac{\partial^2 \tau(N+1,n,\bpow)}{\partial^2 p_1}-
\frac 12 \frac{\tau(N,n+1,\bpow)}{\partial^2 p_1} \tau(N+1,n,\bpow)=\nonumber
\\
\frac{\partial\tau(N+2,n,\bpow)}{\partial p_1}\tau(N-1,n+1,\bpow)-
 \frac{\partial \tau(N+1,n+1,\bpow)}{\partial p_1}\tau(N,n,\bpow)
\eea
Taking $N'=N+1$ and all $p_i=p_i',\,i\neq 2$ in (\ref{Hirota-N-BKP(deLeur)})
and picking up the terms linear in $p'_2-p_2$ we obtain
\bea
\frac 12 \frac{\partial\tau(N,n,\bpow)}{\partial p_2} \tau(N+1,n,\bpow)-
\frac 12 \tau(N,n,\bpow)\frac{\partial\tau(N+1,n,\bpow)}{\partial p_2} \nonumber
+\frac 12 \frac{\partial^2\tau(N,n,\bpow)}{\partial^2 p_1} \tau(N+1,n,\bpow)\\
+\frac 12 \tau(N,n,\bpow)\frac{\partial^2\tau(N+1,n,\bpow)}{\partial^2 p_1}
- \frac{\partial\tau(N,n,\bpow)}{\partial p_1}\frac{\partial\tau(N+1,n,\bpow)}{\partial p_1}
=\tau(N+2,n,\bpow)\tau(N-1,n,\bpow)
\eea

\subsection{Fermionic formulae\label{fermionic-appendix}}
Details may be found in \cite{OS-2000, OST-I}.
Let
 $\{\psi_i$, $\psi_i^\dag$, $i \in \mathbb{Z}\}$ are Fermi creation and
annihilation operators that  satisfy the usual anticommutation relations and vacuum annihilation conditions
\be
[\psi_i^{(a)}, \ \psi^{\dag(b)}_j]_+ = \delta_{ij}\delta_{a,b}, \quad \psi^{(1)}_i | n,*\rangle =
\psi_{-i-1}^{\dag(1)} | n,* \rangle =0,\quad \psi^{(2)}_i | *,n\rangle =
\psi_{-i-1}^{\dag(2)} | *,n \rangle =0 \  \text{ if } \  i< n,
 \ee

 Sometimes we will omit the superscript $(1)$ in particular write  $\psi$ instead of $\psi^{(1)}$.

The hypergeometric tau functions may be written as follows
\[
 \tau^{\rm TL}_r(n,\bpow,\bbpow)=g(n)\l n| e^{\sum_{m>0} \frac 1m J_mp_m} e^{\sum_{m>0}\frac 1m p_m A_m}|n\r
\]
where $J_m=\sum_{i\in\mathbb{Z}}\psi_i\psi^\dag_{i+m}$ and
$A_m=\sum_{i\in\mathbb{Z}}r(i)\dots r(i-m)\psi_i\psi^\dag_{i-m}$.
The semiinfinite TL may be described either putting by $r(N)=0$ ,or, it is may be suitable to present it in form
\[
 \tau^{\rm TL}_r(M,n,\bpow,\bbpow)=(-1)^{\frac{M(M+1)}{2}}g(n)\l M+n,-M-n| e^{\sum_{m>0} \frac 1m J^{2}_mp_m -\frac 1m p_m A_m}
 e^{\sum_{n\in\mathbb{Z}} \psi_i^{(1)}\psi_{-i-1}^{\dag (2)}}\,|n,-n\r
\]
For BKP \cite{KvdLbispec} one needs to introduce an additional Fermi mode  $\phi$ which anticommutes with each other
Fermi operator except itself: $\phi^2=\frac 12$, and
 $\phi|0\r=\frac{1}{\sqrt{2}}|0\r$. Then
\be
 \tau^{\rm BKP}_r(N,n,\bpow,\bbpow)=g(n)\l N+n| e^{\sum_{m>0} \frac 1m B_m p_m} e^{\omega}|n\r =
 \l N+n| e^{\sum_{m>0} \frac 1m J_m p_m} e^{-\sum_{i\in\mathbb{Z}} U_i: \psi_i\psi_i^\dag :} e^{\omega}|n\r
\ee
and
\be
\tau^{\rm BKP}_r(N=\infty,0,\bpow)= \l 0| e^{\sum_{m>0} \frac 1m B_m p_m} e^{\omega}e^{\omega^\dag}|0\r= g(n)
\sum_{\lambda\in\Pa} r_\lambda(0)s_\lambda(\bpow)
\ee

where
\be\label{r-U}
r(i)=e^{U_{i-1}-U_{i}}
\ee
and
\[
\omega=\sum_{i>j} \psi_i\psi_j\,-\sqrt{2}\,\phi \sum_{i\in\mathbb{Z}} \psi_i
\]
\[
 \omega_-=\sum_{i>j\ge 0} \psi_i\psi_j\,-\sqrt{2}\,\phi\sum_{i\ge 0} \psi_i\,,
 \quad \omega_+ = \sum_{i>j\ge 0} (-)^{i+j}\psi^\dag_{-j-1}\psi^\dag_{-i-1}\,+\,\sqrt{2}\phi\sum_{i\ge 0} \psi^\dag_{-i-1},
\]

\[
 B_m=\sum_{i\in\mathbb{Z}}\frac{1}{r(i)}\dots \frac{1}{r(i+m)}\psi_i\psi^\dag_{i+m}
\]
and
\[
g(n)=\l n|e^{\sum_{i\in\mathbb{Z}} U_i: \psi_i\psi_i^\dag :}|n\r=
\]
\bea\label{g(n)}
 e^{-U_0+\cdots -U_{n-1}}\quad {\rm if}\,\, n>0 \\
1\quad {\rm if} \,\,n=0 \\
e^{U_{-1}+\cdots U_{n}}\quad {\rm if}\,\, n<0
\eea

\subsection{BKP tau functions.}

\paragraph{Hirota equations for multicomponent BKP.} 
 This is a particular case of the multicomponent BKP tau function, introduced in \cite{KvdLbispec},
 \be
 \tau({\bf N}; {\bf s}):=
 \l N^{(1)},\dots, N^{(p)} \vert e^{\sum_{a=1}^p\sum_{i>0} \beta^{a} s^{(a)}} h^{(1,\dots,p)} \vert 0,0\r\,
 \ee
where $h^{(1,\dots,p)}$ solves
\be\label{fermionic-Hirota-BKP}
\left[ h^{(1,\dots,p)}\otimes h^{(1,\dots,p)}, \sum_{a=1}^p\sum_{i\in\mathbb{Z}} \psi^{(a)}_i\otimes \psi^{\dag(a)}_i +  
\sum_{a=1,2}\sum_{i\in\mathbb{Z}} \psi^{\dag(a)}_i\otimes \psi^{(a)}_i +\varphi\otimes \varphi 
\right]=0
\ee
 From (\ref{fermionic-Hirota-BKP}) the multicomponent BKP Hirota equations are obtained \cite{KvdLbispec}:
\bea\label{Hirota-p-component-lBKP}
  \sum_{a=1}^p\oint\frac{dz}{2\pi i}z^{N^{(a)}{'}-N^{(a)}-2}e^{V(s^{(a)}{'}-s^{(a)},z)}
  \tau\left({\bf N}_-^{[a]}{'} ;{\bf s}_-^{[a]}{'}(z)\right)
  \tau\left({\bf N}_+^{[a]}  ; {\bf s}_+^{[a]}(z)\right)
   \nonumber\\
  + \sum_{a=1}^p \oint\frac{dz}{2\pi i}z^{N^{(a)}-N^{(a)}{'}-2}e^{V(s^{(a)}-s^{(a)}{'},z)}
  \tau\left({\bf N}_+^{[a]}{'} ;{\bf s}_+^{[a]}{'}(z)\right)
  \tau\left({\bf N}_-^{[a]} ;{\bf s}_-^{[a]}(z)\right)
   \nonumber\\
= \frac{1}{2}(1-(-1)^{\sum_{a=1}^p(N^{(a)}{'}+N^{(a)})})
   \tau\left({\bf N}{'} ; {\bf s}{'}\right)
  \tau\left({\bf N} ; {\bf s}\right) \quad
\eea
where 
\[
{\bf N}_\pm^{[a]} :=\left(N^{(1)},\dots,N^{(a-1)},N^{(a)}\pm 1,N^{(a+1)},\dots,N^{(p)} \right)
\]
\[
{\bf s}_\pm ^{[a]}(z):= \left(s^{(1)},\dots,s^{(a-1)}, s^{(a)}\pm [z^{-1}],s^{(a+1)},\dots, s^{(p)} \right)
\]
In (\ref{Hirota-p-component-lBKP}), ${\bf N}=\left(N^{(1)},\dots,N^{(p)} \right)$ and 
${\bf N}{'}=\left(N^{(1)}{'},\dots,N^{(p)}{'} \right)$ are two independent sets of vacuum charges, while
$s^{(a)}=\left( s^{(a)}_1,s^{(a)}_2, s^{(a)}_3, \right)$ and
$s^{(a)}{'}=\left( s^{(a)}_1{'},s^{(a)}_2{'}, s^{(a)}_3{'}, \right)$, $a=1,\dots,p$, are two independent sets 
of the multicomponent BKP higher times.

\paragraph{Pfaffian.} If $A$ an anti-symmetric matrix of an odd order its determinant
vanishes. For even order, say $k$, the following multilinear form
in $A_{ij},i<j\le k$
 \be\label{Pf''}
\Pf [A] :=\sum_\sigma
{\sgn(\sigma)}\,A_{\sigma(1),\sigma(2)}A_{\sigma(3),\sigma(4)}\cdots
A_{\sigma(k-1),\sigma(k)}
 \ee
where sum runs over all permutation restricted by
 \be
\sigma:\,\sigma(2i-1)<\sigma(2i),\quad\sigma(1)<\sigma(3)<\cdots<\sigma(k-1),
 \ee
 coincides with the square root of $\det A$ and is called the
 {\em Pfaffian} of $A$. As one can see the Pfaffian  contains
 $1\cdot  3\cdot 5\cdot \cdots \cdot(k-1)=:(k-1)!!$ terms.

 \paragraph{BKP tau functions} \cite{OST-I}. A class of BKP tau functions has the following form
 \[
 \tau^{\rm BKP}(N,n,\bpow;A)=\,\sum_{h_1>\cdots >h_N\ge 0}\,{\bar A}_h(n)\,s_{\{ h\}}(\bpow) 
 \]
where $s_{\{ h\}}:=s_\lambda$, $h_i=\lambda_i-i+N$, $i=1,\dots,N$. 
  The factors ${\bar A}_h(n)$ on the right-hand side are
   determined in terms a pair $(A, a)=:{\bar A}$ where $A$ is an infinite skew symmetric matrix and $a$
an infinite vector. For a strict partition $h=
(h_1,\dots,h_N )$, the numbers
${\bar A}_h(n)$ are defined as the Pfaffian of an antisymmetric $2k
\times 2k$ matrix ${\tilde A}$ as follows:
  \be
  \label{A-c}
{\bar A}_{h}(n):=\,\Pf[{\tilde A}]
  \ee
where for $N=2k$ even
  \be
  \label{A-alpha-even-n}
{\tilde A}_{ij}=-{\tilde A}_{ji}:=A_{h_i+n,h_j+n},\quad 1\le
i<j \le 2k
  \ee
and for $N=2k-1$ odd
 \be \label{A-alpha-odd-n} {\tilde
A}_{ij}(n)=-{\tilde A}_{ji}(n):=
\begin{cases}
A_{h_i+n,h_j+n} &\mbox{ if }\quad 1\le i<j \le 2k-1 \\
a_{h_i+n} &\mbox{ if }\quad 1\le i < j=2k .
 \end{cases}
  \ee
In addition we set ${\bar A}_0 =1$.

The fermionic realization for this tau function is
\[
\tau^{\rm BKP}(N,n,\bpow;A)\,=\,\l N+n| \,e^{\sum_{m>0}\,\frac{J_mp_m}{m}\,}\, e^{\sum_{i>j}\,A_{ij}\psi_i\psi_j +
\sqrt{2}\sum_i\,a_i\psi_i \phi}\,|n\r
\]
see \cite{OST-I}.

 \paragraph{2KP tau functions}. A class of 2KP tau functions has the following form
 \[
 \tau^{\rm 2KP}(N,n,\bpow^{(1)},\bpow^{(2)};B)=\,
 \sum_{h^{(1)}_1>\cdots >h^{(1)}_N\,\ge 0\atop h^{(2)}_1>\cdots >h^{(2)}_N\,\ge 0
 }\, B_{h^{(1)},h^{(2)}}(n)\,s_{\{ h^{(1)}\}}(\bpow^{(1)}) s_{\{ h^{(2)}\}}(\bpow^{(2)})
 \]  
  The factors $B_{h^{(1)},h^{(2)}}(n)$ on the right-hand side is
   determined in terms an infinite matrix $B$. For a pair of strict partitions $h^{(i)}$, $i=1,2$,
 the numbers $B_{h^{(1)},h^{(2)}}(n)$ are defined as follows:
  \be
  \label{B}
B_{h^{(1)},h^{(2)}}(n):=\,\det\left[B_{h^{(1)}_i+n,h^{(2)}_j+n}\right]_{i,j=1,\dots,N}
  \ee
compare to \cite{TakasakiSchur}.

The fermionic realization for this tau function is
\[
\tau^{\rm 2KP}(N,n,\bpow^{(1)},\bpow^{(2)};B)\,=\,
\l N+n,-N+n| \,e^{\sum_{i=1,2}\sum_{m>0}\,\frac{J^{(i)}_mp^{(i)}_m}{m}\,}\, 
e^{\sum_{i,j}\,B_{ij}\psi^{(1)}_i\psi^{\dag(2)}_{-1-j}}\,|n,n\r
\]
For skew-symmetric $B$ this tau function is the square of the DKP tau function \cite{LeurO-2014}.

\paragraph{Determinants and pfaffians of special matrices.} 
We will look at determinants of degenerate matrices of form
\be\label{degenerateA}
A_{ij}=b_i c_j\,,\quad i,j=1,\dots,n
\ee
where $b_i$ and $c_j$ are odd Grassmannian numbers. We see that
\be
\det A = n! b_1\cdots b_n c_n\cdots c_1
\ee
For instance for a $2\times 2$ matrix (\ref{degenerateA}) we have $\det A=2b_1b_2c_2c_1$.

Now, let $A$ is a skew-symmetric matrix given by
\be\label{skew-quasi-gen}
A_{ij}=b_i c_j\,,i<j\,,\quad i,j=1,\dots,2n
\ee
Then for both $b_i$ and $c_i$ are odd Grassmannian numbers we get
\be\label{skew-quasi-gen-pf-odd}
\Pf A = \,(2n-1)!!\,b_1\cdots c_{2n}b_{2n}\cdots b_1
\ee
For both $b_i$ and $c_i$ are even Grassmannian numbers we obtain
\be\label{skew-quasi-gen-pf-even}
\Pf A = \,b_1\cdots b_{2n}c_{2n}\cdots c_1
\ee

\paragraph{Exponentials.}
Let $\xi_i$ and $\eta_i$ are odd Grassmannian numbers.
We have
\be\label{exp=sum-det}
e^{\sum_{i,j} \,A_{ij}\xi_i\eta_j}=1+\sum_{k>0}A_{(\alpha_1,\dots,\alpha_k|\beta_1,\dots,\beta_k)}
\xi_{\alpha_1}\cdots\xi_{\alpha_1} \eta_{\beta_k}\cdots\eta_{\beta_1}
\ee
where
\be\label{det-A}
A_{(\alpha_1,\dots,\alpha_k|\beta_1,\dots,\beta_k)}=\det\left( A_{\alpha_i\beta_j} \right)_{i,j=1,\dots,k}
\ee
Now let $A$ is a skew-symmetric matrix (\ref{skew-quasi-gen}) and $\{a_i\}\,,i>0$ is a set of (even) numbers.

\paragraph{Quasi-tau functions}

One can consider the following series in the Schur functions (compare to \cite{AMMN-2014})
\be\label{quasitau}
\tau^{[\textsc{e}]}(n,\{\bpow^{i}\} ):= g(n)\sum_{\lambda\in\Pa} r_\lambda(n)\prod_{i=1}^\textsc{e} s_\lambda(\bpow^{(i)})
\ee
It may be presented in forms
\be
\tau^{[\textsc{e}]}(n,\{\bpow^{i}\} )= g(n)\l n+N|e^{\sum_{m>0}^k p^{(1)}_m B_m+
\sum_{i=2}^k p^{(i)}_m J^{(i)}_m} e^{\omega^{[\textsc{e}]}} |0\r =
\l n+N|e^{
\sum_{i=1}^k p^{(i)}_m J^{(i)}_m} e^{-\sum_{i\in\mathbb{Z}}U_i :\psi_i\psi^\dag_i:} e^{\omega^{[\textsc{e}]}}  |0\r
\ee
where
\be\label{omega-e}
\omega^{[\textsc{e}]}:=\,\sum_{i>j} \prod_{a=1}^\textsc{e} \left(\psi_i^{(a)}\psi_j^{(a)}\right)\,-
\,\sqrt{2}\,\phi\sum_{i\in\mathbb{Z}} \prod_{a=1}^\textsc{e}\psi_i^{(a)}
\ee
which may be viewed as a sum of commutative elements in the tensor product
${\hat o}(2\infty+1)\otimes \cdots \otimes {\hat o}(2\infty+1)$.

Let us note that
\be\label{omega-generating-tensor-Fock}
 e^{\omega^{[\textsc{e}]}} |n\r \,=\,
 1+\sum_{N>0} \, c_\textsc{e}\,\underbrace{|\lambda,N+n \r \otimes\cdots \otimes|\lambda,N+n}_\textsc{e}\r
\ee
where $c_\textsc{e}=1$ for even $\textsc{e}$ and $c_\textsc{e}=(2\textsc{e}-1)!!$ for $\textsc{e}$ odd.

 By coupling (\ref{omega-generating-tensor-Fock}) with the vector
\[
\l N+n | \, e^{\sum_{a=1}^\textsc{e}\sum_{m>0} \frac 1m J^{(a)}_m p^{(a)}_m}\, e^{-\sum_{i\in\mathbb{Z}} U_i: \psi_i\psi_i^\dag :}
\]
we obtain the right hand side of (\ref{quasitau}).

For $\textsc{e}=1$ we obtain hypergeometric $\tau_r^{\rm BKP}(N,n,\bpow)$. For $\textsc{e}=2$ we obtain
$\tau_r^{\rm TL}(N,n,\bpow^{(1)},\bpow^{(2)})$.

\section{Matrix integrals as generating functions of Hurwitz numbers and of Hurwitz generating series
\label{Matrix-integrals-appendix}}

The task of this section is to present Hurwitz generating series (\ref{hyp-tau-e-f}) in form of matrix integrals. 
The idea is that various integrals of products of the 2KP and BKP tau functions of matrix argument
\footnote{
If $\bpow=(p_1,p_2,\dots)$ and $p_m=\tr M^m$ where $M$ is a matrix, we may write $\tau=\tau(M):=\tau\left(\bpow(M) \right)$ 
and call it
tau function of matrix argument} result in the series (\ref{hyp-tau-e-f}). 

Below we will write down matrix integrals which generate Hurwitz numbers themselves rather than weighted sums
(as (\ref{MM-Schur-series}) does),  though we also present some generating series for the sums.

\paragraph{Notation. Useful relations.}

We shall exploit the known formulae for integrations of Schur functions over the
unitary group and over complex matrices. Earlier they were used in \cite{O-2003} to clarify some links between
matrix models and integrable hierarchies. 

As in \cite{O-2003} instead of power sums written by small bold characters (like $\bpow$) sometimes where it is suitable 
we shall use the matrix arguments written by large character. Say, $s_\lambda(X)$ and $\tau(X)$ are respectively equal 
to  $s_\lambda(X):=s_\lambda\left(\bpow(X)\right)$ and  $\tau(X):=\tau \left(\bpow(X)\right)$ where
 $\bpow(X)=(p_1(X),p_2(X),\dots)$, where $p_m(X)=\tr X^m$.

 We use the following notations
 \begin{itemize}
  \item $  d_*U $ is the normalized Haar measure on $\mathbb{\mathbb{U}}(N)$: $\int_{\mathbb{U}(N)}d_*U =1$
  
  \item $Z$ is a complex matrix, $Z=UX(1+J)U^\dag$ (the Schur decomposition), where $X=\diag (z_i)$ is diagonal, 
  $J$ is strictly upper triangle, $U\in\mathbb{\mathbb{U}}(N)$
    $$
d\Omega^\texttt{C}(Z,Z^\dag)  =\,\pi^{-n^2}\,e^{-\tr \left(ZZ^\dag\right)}\,
\prod_{i,j=1}^N \,d \Re Z_{ij}d \Im Z_{ij} 
  $$
  \[
 = \,c_\texttt{Z} \,
d_*U\,||\,\prod_{N\ge i>j}\,|z_i-z_j|^2\,
\prod_{i=1}^N\,e^{-|z_i|^2}\,d^2z_i \,\left[e^{-\tr JJ^\dag} d^2 J_{ij}\right]  
  \]
  where the part related to the upper triangular factor in brackets is not important for our problems.
  
  \item $M$ is a normal matrix, $Z=UXU^\dag$, where $X=\diag (z_i)$ is diagonal,  
  $U\in\mathbb{\mathbb{U}}(N)$  
  $$  
  d\Omega^\texttt{N}(M,M^\dag)\,= \,\pi^{-n^2}\,e^{-\tr\left(MM^\dag \right)}\,
\prod_{i,j=1}^N d \Re M_{ij}d \Im M_{ij}  
  $$
  \[
   =\,c_\texttt{M}\, d_*U \,
\prod_{N\ge i>j}|z_i-z_j|^2\,\prod_{i=1}^N \,e^{-|z_i|^2}
\,d^2 z_i
  \]

  \item $H^{(1)}$ is a Hermitian matrix and $H^{(2)}$ is anti-Hermitian one, 
  $H^{(c)}=U^{(c)}X^{(c)}U^{(c)\dag}$, $X^{(c)}=\diag(x_i^{(c)})$,
  $U,U^{(c)}\in\mathbb{U}(N)$, $c=1,2$. Measures 

 $$
 d\Omega^\texttt{HU}(H^{(1)},H^{(2)},U)=
e^{-\tr \left(H^{(1)} U H^{(2)} U^\dag \right)} d_*U\,\prod_{i\le j} 
d\Re H^{(1)} d\Im H^{(2)}\prod_{i<j} d\Im H^{(1)} d\Re H^{(2)} ,
  $$
  
  $$
   d\Omega^\texttt{H}(H^{(1)},H^{(2)})= \,\int_{\mathbb{U}(N)}\,
e^{-\tr \left(H^{(1)} UH^{(2)}U^\dag\right)}d_*U\, \prod_{i\le j} 
d\Re H^{(1)} d\Im H^{(2)}\prod_{i<j} d\Im H^{(1)} d\Re H^{(2)}
  $$ 
  \[
  =\,c_\texttt{H}\, \prod_{c=1,2} d_*U^{(c)}
\prod_{N\ge i>j}(x^{(c)}_i-x^{(c)}_j)\prod_{i=1}^N e^{-x^{(1)}_i x^{(2)}_i}d x^{(1)}_i d  x^{(2)}_i
  \]
 \end{itemize}
where the constants $c_a$, $a=\texttt{C,N,H}$,  are chosen for normalization: $\int d\Omega_\rho^{(a)}=1$.
\br \label{notation-M-M^*-1} In what follows,
for unification and to save space,
 we shall use the notation $M$ and $M^*$ replacing the pairs $Z,Z^\dag$, $M,M^\dag$ and also $H^{(1)},H^{(2)}$.
In the last case the matrices $M$ and $M^*$ are not related by the Hermitian conjugation.
\er

These measures provides the relation
\be\label{s-s-N_lambda}
\int s_\lambda(M)s_\mu(M^*)\,d\Omega^a(M,M^*) = (N)_\lambda\delta_{\lambda,\mu}
\ee
where $a=\texttt{C},\texttt{N},\texttt{H}$. This relation was used in 
\cite{O-Acta},\cite{HO-2MM},\cite{O-2003},\cite{AMMN-2014},\cite{OShiota-2004}, for for models of Hermitian, complex 
and normal matrices.
\footnote{If we 
replace the factor $e^{\tr\left(MM^* \right)}$ in the measure 
$d\Omega^a$ by a hypergeometric tau function $\tau_r(N,MM^*,I_N)$, then the factor $(N)_\lambda$ in the right 
hand side of (\ref{s-s-N_lambda}) should be replaced by $\frac{1}{r_\lambda(N)}$ \cite{O-Acta}.}

By $I_N$ we shall denote the $N\times N$ unit matrix.
Then (for instance see \cite{Mac})
\bl \label{useful-relations}
Let $A$ and $B$ be normal  matrices (i.e. matrices diagonalizable by unitary transformations). Then
\begin{equation}\label{sAUBU^+}
\int_{\mathbb{U}(N)}s_\lambda(AUBU^{-1})d_*U=
\frac{s_\lambda(A)s_\lambda(B)}{s_\lambda(I_n)} \ ,
\end{equation}
For $A,B\in GL(N)$ we have
\begin{equation}\label{sAUU^+B}
\int_{\mathbb{U}(n)}s_\mu(AU)s_\lambda(U^{-1}B)d_*U=
\frac{s_\lambda(AB)}{s_\lambda(I_N)}\delta_{\mu,\lambda}\,.
\end{equation}
Below ${\bf p}_{\infty}=(1,0,0,\dots)$. 
\begin{equation}\label{sAZBZ^+}
\int_{\mathbb{C}^{n^2}} s_\lambda(AZBZ^+)e^{-\textrm{Tr}
ZZ^+}\prod_{i,j=1}^n d^2Z=
\frac{s_\lambda(A)s_\lambda(B)}{s_\lambda({\bf p}_{\infty})}
\end{equation}
and
\begin{equation}\label{sAZZ^+B}
\int_{\mathbb{C}^{n^2}} s_\mu(AZ)s_\lambda(Z^+B) e^{-\textrm{Tr}
ZZ^+}\prod_{i,j=1}^nd^2Z= \frac{s_\lambda(AB)}{s_\lambda({\bf
p}_{\infty})}\delta_{\mu,\lambda}\,.
\end{equation}
\el
We recall that
$$ s_\lambda(I_N)=(N)_\lambda s_\lambda(\bpow_\infty)\,,\qquad\chi_\lambda(1)=|\lambda|!s_\lambda(\bpow_\infty) $$
\br
Usually the relation (\ref{sAZBZ^+}) is written down for positive matrices $A$ and $B$. Equation 
(\ref{sAZBZ^+}) may be derived using the Gauss integration. Let us note that for any $A,B\in GL(N)$
the Gauss integrals
of products of type $\prod_{i} \left(\tr C^{k_i}\right)^{m_i}$ where $C=AZBZ^\dag$ yields sums of terms 
$\prod_{i} \left(\tr A^{k'_i}\right)^{m'_i}\prod_{i} \left(\tr A^{k''_i}\right)^{m''_i}$ which depend only on the spectrums
of matrices $A$ and $B$. 
\er

Lemma \ref{useful-relations} allows to pick up Hurwitz numbers from matrix integrals in many ways. Below we describe 
a set of the most natural ones.

First of all, step by step applying (\ref{sAZBZ^+}) we arrive at

\be\label{sAZBZ^+-corollary}
\int_{\left[C^{N^2}\right]^{\times(\textsc{f}-1)}}
s_\lambda\left(A_{\textsc{f}}\left(Z_{\textsc{f}-1}A_{\textsc{f}-1}Z_{\textsc{f}-1}^{\dag} \cdots Z_1A_1Z_1^\dag\right)\right)
\prod_{i=1}^{\textsc{f}-1} d\Omega^{\texttt{C}}(Z_i,Z_i^\dag)=
\frac{\prod_{i=1}^{\textsc{f}}  s_\lambda(A_i)}{\left(s_\lambda(\bpow_\infty)\right)^{\textsc{f}-1}}
\ee

We obtain the multi-matrix analogues of the Itsykson-Zuber integral

\be\label{multi-Itsykson-Zuber-Z}
\int e^{\tr V\left( X_{\textsc{f}-1} M_1X_1M_1^* \cdots 
M_{\textsc{f}-2}X_{\textsc{f}-2}M_{\textsc{f}-2}^*,\,\bpow \right)} 
\prod_{i=1}^{\textsc{f}-1} d\Omega^\texttt{C}(M_i,M_i^*)\,=
\,\left(s_\lambda(\bpow_\infty)\right)^2\,
\frac{s_\lambda(\bpow)}{s_\lambda(\bpow_\infty)}
\prod_{i=1}^{\textsc{f}-1}\,
\frac{s_\lambda(X_i)}{s_\lambda(\bpow_\infty)}
\ee

The relation (\ref{multi-Itsykson-Zuber-Z}) gives the generating function for Hurwitz numbers on the sphere
with $\textsc{f}$ ramification points (compare to \cite{ChekhovAmbjorn}).

Actually we use the simplest (the so-called vacuum) 2-KP tau function 
\be\label{vac-tau-2KP}
\tau_1^{\rm 2KP}(X,\bpow)\,:=\,\sum_\lambda \,s_\lambda(X)\,s_\lambda(\bpow)=e^{\tr V(X,\bpow)},\quad
\tau_1^{\rm 2KP}(X,\bpow_\infty)\,=e^{\tr X}
\ee
where
\[
 V(z,\bpow)=\sum_{m>0}\,
 \frac{z^mp_m}{m}
\]
($z$ may be a number and may be a matrix) as the integrand in 
(\ref{multi-Itsykson-Zuber-Z}).
To get analogues of  (\ref{multi-Itsykson-Zuber-Z}) for the projective plane we use the simplest BKP tau function 
\cite{OST-I}
\be\label{vac-tau-BKP}
 \tau_1^{\rm BKP}(X)\,:=\,\sum_\lambda \,s_\lambda(X)\,=\,\prod_{N>i>j}\,(1-x_ix_j)^{-1}\,\prod_{i=1}^N\,(1-x_i)^{-1}
\ee
Then we have
\be\label{multi-Itsykson-Zuber-Z-E=1}
\int_{{C^{N^2}}^{\times(\textsc{f}-1)}} \tau_1^{\rm BKP}\left( X_{\textsc{f}} Z_1X_1Z_1^\dag \cdots 
Z_{\textsc{f}-1}X_{\textsc{f}-1}Z_{\textsc{f}-1}^\dag \right)  
\prod_{i=1}^{\textsc{f}-1}\,e^{-\tr Z_iZ_i^\dag} d^2Z_i\,=
\, s_\lambda(\bpow_\infty) \prod_{i=1}^{\textsc{f}}\,
\frac{s_\lambda(X_i)}{s_\lambda(\bpow_\infty)}
\ee
The last formula is the generating function for the Hurwitz numbers for projective plane with $\textsc{f}$ 
ramification points.

{\bf Example}. $\mathbb{CP}^1$ with three ramification point. In particular we have the following integrals
generating Hurwitz numbers
\[  
\sum_{\lambda\atop\ell(\lambda)\le N} \frac{s_\lambda(\bpow^{(1)})
s_\lambda(\bpow^{(2)})s_\lambda(\bpow^{(3)})}{s_\lambda(\bpow_\infty)}\, =
\]
\[
= \,\int e^{\tr V(M_1 M_2,\,\bpow^{(3)})}\prod_{i=1,2} e^{\tr V(M^*_i,\,\bpow^{(i)})} d\Omega^a(M_i,M^*_i)
\]
\[
=\,\int e^{\tr V(\Lambda M,\,\bpow^{(1)})+ \tr V(M^*,\,\bpow^{(2)})} \,d\Omega^a(M,M^*)\,,\quad p_m^{(3)}=\tr \Lambda^m
\]
\[
 =\,\int e^{\tr V\left( X_2 M_1X_1M_1^*,\,\bpow \right)} 
\prod_{i=1,2} d\Omega^\texttt{C}(M_i,M_i^*)\,,\qquad p_m^{(i)}=\tr \left(X_i \right)^m,\, i=1,2
\]
\[
 =\,\int e\, ^{\tr \left( X_3 M_1X_1M_1^* M_2X_2M_2^*\right)} \,
\prod_{i=1}^{3} d\Omega^\texttt{C}(M_i,M_i^*)\,,\qquad p_m^{(i)}=\tr \left(X_i \right)^m,\, i=1,2,3
\]
where $a=\texttt{C,N,H}$.

{\bf Example}. Coverings of $\mathbb{RP}^2$ with three ramification point. To get a generating integral of the related Hurwitz
numbers one can take a $\mathbb{CP}^1$ Hurwitz
integral with 4 ramification points and replace any of the 2KP tau functions under the integral by a BKP tau function.
For instance, 
\[
 \frac{\prod_{i=1}^4\,s_\lambda(\bpow^{(i)}) }{\left( s_\lambda(\bpow_\infty) \right)^{2}}=
\]
\[
 \int e^{\tr V(\Lambda M_1^*M_2^*,\,\bpow^{(3)})}\,\prod_{i=1,2}\,e^{\tr V(M_i,\,\bpow^{(i)})}\,d\Omega^{\texttt{C}}(M_i,M_i^*),
 \quad p^{(4)}_m=\tr \Lambda^m
\]
Then the generating integral for $\mathbb{RP}^2$ Hurwitz numbers for covering given by 3 ramification points may be 
written as
\[
 \frac{\prod_{i=1}^3\,s_\lambda(\bpow^{(i)}) }{\left( s_\lambda(\bpow_\infty) \right)^{2}}=
\]
\[
 =\,\int \,\tau_1^{\rm BKP}(\Lambda M_1^*M_2^*)\,\prod_{i=1,2}\,e^{\tr V(M_i,\,\bpow^{(i)})}\,d\Omega^{\texttt{C}}(M_i,M_i^*)
 \,,\qquad p^{(3)}_m\,=\,\tr\,\Lambda^m
\]
\[
 = \int e^{\tr V(\Lambda M_1^*M_2^*,\,\bpow^{(2)})}\,e^{\tr V(M_1,\,\bpow^{(1)})}\,\tau_1^{\rm BKP}(M_2)\,
 \prod_{i=1,2}\,d\Omega^{\texttt{C}}(M_i,M_i^*)\,,\qquad p^{(3)}_m\,=\,\tr\,\Lambda^m
\]

{\bf Example}. An analog of (\ref{MM-Schur-series}) for $\mathbb{RP}^2$ with three ramfication points with two 
arbitrary profiles at $0$ and at $\infty$ with fixed length in the third point:
\[
 \sum_{\lambda}\frac{s_\lambda(I_N)s_\lambda(\bpow^{(1)})s_\lambda(\bpow^{(2)})}{\left(s_\lambda(\bpow_\infty) \right)^2}
\]
\[
 = \,\int \,\tau^{\rm BKP}_1\left( M_1 M_2 \right)  \,\prod_{i=1,2} \,
  e^{V(\tr M^*_i,\,\bpow^{(i)})}\,d\Omega^a(M_i,M^*_i)\,
\]
\[
=\, \int \, e^{\tr\left(\Lambda M_1 M_2 \right) }\,\tau^{\rm BKP}_1(M_1^*)\,e^{\tr V(M^*_2,\,\bpow)} \,
\prod_{i=1,2}\,d\Omega^{\texttt{C}}(M_i,M^*_i)\,, \qquad p^{(2)}_m =\tr \Lambda^m
\]
\[
  =\,\int \, e^{\tr\left( M_1 M_2 M_3\right) }\,\tau^{\rm BKP}_1(M_3^*) \,
  e^{\tr V(M^*_1,\,\bpow^{(1)})+\tr V(M^*_2,\,\bpow^{(2)})}\,
  \prod_{i=1}^3 \,d\Omega^{\texttt{C}}(M_i,M^*_i)
\]  
\[
  =\,\int\,  e^{\tr V\left( M_1 M_2 M_3,\,\bpow^{(1)}\right) }\,\tau^{\rm BKP}_1(M_3^*) \,e^{\tr (M^*_1+\tr V(M^*_2,\,\bpow^{(2)})}
  \,  \prod_{i=1}^3 \,d\Omega^{\texttt{C}}(M_i,M^*_i)
\] 
where $a=\texttt{C,N,H}$.

We also need the following relation which follows applying (\ref{sAZBZ^+}) and the applying (\ref{sAUU^+B}):

\be
\int_{[\mathbb{U}(N)]^{\times\textsc{g}}}\int_{[\mathbb{C}^{N^2}]^{\times\textsc{g}}} 
s_\lambda(Y_\textsc{g})\prod_{i=1}^\textsc{g} d_*U_i \prod_{i=1}^{2\textsc{g}}d\Omega^\texttt{C}(M_i,M_i^*)=
\, \left( s_\lambda(\bpow_\infty)\right)^{-2\textsc{g}}
\ee
where
\[
Y_\textsc{g}\,=\,\left(M_{2\textsc{g}} U_\textsc{g} M_{2\textsc{g}}^*M_{2\textsc{g}-1} U_\textsc{g}^\dag M_{2\textsc{g}-1}^*  
\right)\cdots 
\left(M_2 U_1 M_2^* M_1 U_1^\dag M_1^*  \right)
\]

\be
\int s_\lambda(\Lambda M_1^*\cdots M_{\textsc{g}}^*)\,
\prod_{i=1}^{\textsc{g}}\,\tau^{\rm BKP}_1(M_i)\,d\Omega^{\texttt{C}}(M_i,M_i^*)=
\frac{s_\lambda(\Lambda)}{s_\lambda(\bpow_\infty)^{\textsc{g}}}
\ee

We get

Let $M_1,\cdots , M_{\textsc{g}}$,   $\textsc{g}>0 $, be $N\times N$ normal matrices, that is they may be presented as 
$M_i=U_iX_iU_i^{-1}$
where where $U_i\in\mathbb{U}(N)$ and $(X_i)_{ab}=x^{(i)}_a\delta_{a,b}$, with $ x^{(i)}_a\in\mathbb{C},\,a=1,\dots,N$, being 
the eigenvalues of $M_i$. Let $\Lambda$ be diagonal.

\be
\tau_r^{\textsc{e},\textsc{f}+\textsc{g}}
\left(N,n,\Lambda,\bbpow^{(1)},\dots ,\bbpow^{(\textsc{f}-1)},\bpow^{(1)},\dots ,\bpow^{(\textsc{g})}\right):=
\ee
\be
\sum_{\lambda\in\Pa \atop \ell(\lambda)\le N} \,r_\lambda(n)
\left(s_\lambda(\bpow_\infty)\right)^{\textsc{e}}\frac{s_\lambda(\Lambda)}{s_\lambda(\bpow_\infty)}\,
\prod_{i=1}^{\textsc{f}-1}
\frac{s_\lambda(\bbpow^{(i)})}{s_\lambda(\bpow_\infty)}\,\prod_{i=1}^{\textsc{g}}\,
\frac{s_\lambda(\bpow^{(i)})}{s_\lambda(\bpow_\infty)} =
 \ee
 \be
 \label{int-Hurwitz-fixed-E-r-general}
\frac 1c \int  \,
\tau_{\tilde{r}}^{\textsc{e},\textsc{f}}
\left(N,n,\Lambda M^*_1\cdots M^*_{\textsc{g}},\bbpow^{(1)},\dots ,\bbpow^{(\textsc{f}-1)}\right)
\prod_{i=1}^{\textsc{g}} \,e^{\tr\, V\left(M_i,\bpow^{(i)}\right)}\, d\Omega^\alpha_\rho(M_i,M_i^*)
\ee
where $\alpha=\texttt{N,C,H}$ i.e. each $M_i$ and is respectively normal, complex matrix 

where $c=\int d\mu_\rho(M,M^\dag)$,
and where $\rho$ is rather arbitrary which provides $c$ to be finite, and which defines the relation between functions 
$r$ and  $\tilde{r}$ as follows: 	 	
\be\label{tilde-r-r}
n! \,{\tilde r}(1)\cdots {\tilde r}(n)=r(1)\cdots r(n)\,\int_{z\ge 0}\,z^n e^{-z}\,\rho(z) dz
\ee
In particular, for $\rho=1$, ${\tilde r}=r$.

Proof follows from the following relations:
\[
 e^{\tr V\left(M_i,\bpow^{(i)}\right) }=
 \sum_\lambda s_\lambda\left( M_i\right) s_\lambda\left(\bpow^{(i)}\right)\,,
\]
\[
\tau_r^{\textsc{e},\textsc{f}}
\left(N,n,\Lambda M^\dag_1\cdots M^\dag_{\textsc{g}},\bbpow^{(1)},\dots ,\bbpow^{(\textsc{f}-1)}\right)=
\]
\[
\sum_{\lambda\in\Pa \atop \ell(\lambda)\le N} r_\lambda(n)
\left(s_\lambda(\bpow_\infty)\right)^{\textsc{2}}
\frac{s_\lambda(\Lambda M^\dag_1 \cdots M^\dag_{\textsc{f}-1})}{s_\lambda(\bpow_\infty)}
\prod_{i=1}^{\textsc{f}-1}
\frac{s_\lambda(\bbpow^{(i)})}{s_\lambda(\bpow_\infty)}
\]
 then, having in mind that 
 $s_\lambda\left(UMU^\dag\right)=s_\lambda(M)$,  from (\ref{tilde-r-r}) which results in
\be\label{s-s-N-lambda-scalar-product}
 r_\lambda(N)\int s_\lambda(M)s_\mu(M^\dag)  d\mu(M,M^\dag) =
 \,{\tilde r}_\lambda(N)\,(N)_\lambda\,\delta_{\lambda,\mu}\,,
\ee
used in \cite{OShiota-2004}\footnote{see \cite{HO-scalar-prod} where this was used to define a deformed scalar product.}.
 
 For $\tau_r^{\textsc{e},\textsc{f}}$ we take the simplest 
 (the so-called, vacuum)  TL tau function, we obtain

The generating function for the Hurwitz numbers on the sphere with $\textsc{f}$ ramification points may be written as follows:
\be
\tau^{2,\textsc{f}}
\left(N,\Lambda,\bpow^{(1)},\dots ,\bpow^{(\textsc{f}-1)}\right):=\,
\sum_{\lambda\in\Pa \atop \ell(\lambda)\le N} \left(s_\lambda(\bpow_\infty)\right)^{2}
\frac{s_\lambda(M)}{s_\lambda(\bpow_\infty)}\prod_{i=1}^{\textsc{f}-1}
\frac{s_\lambda(\bpow^{(i)})}{s_\lambda(\bpow_\infty)}
\ee
\be
\label{int-Hurwitz-sphere-general}
=\,\int   e^{\tr\,V\left( \Lambda M^*_1\cdots M^*_{\textsc{f}-2},\bpow^{(\textsc{f}-1)} \right)}\prod_{i=1}^{\textsc{f}-2}
e^{\tr\, V\left(M_i,\bpow^{(i)}\right)} d\Omega^a \left(M_i,M_i^*\right)
\ee
For independent proof one may use
\[
 e^{\tr V\left( \Lambda M^\dag_1\cdots M^*_{\textsc{f}-2},\bpow^{\textsc{f}-1} \right)}=
 \sum_\lambda s_\lambda\left( \Lambda M^*_1 \cdots M^\dag_{\textsc{f}-1} \right) s_\lambda\left(\bpow^{\textsc{f}-1}\right)\,,
\]

The generating function for the Hurwitz numbers on the projective plane with $\textsc{f}$ ramification points may 
be written as follows:
\[
\tau^{1,\textsc{f}}
\left(N,\Lambda,\bpow^{(1)},\dots ,\bpow^{(\textsc{f}-1)}\right):=\,
\sum_{\lambda\in\Pa \atop \ell(\lambda)\le N} \,s_\lambda(\bpow_\infty)\,
\frac{s_\lambda(\Lambda)}{s_\lambda(\bpow_\infty)}\,\prod_{i=1}^{\textsc{f}-1}\,
\frac{s_\lambda(\bpow^{(i)})}{s_\lambda(\bpow_\infty)} =
\]
\be
\label{int-Hurwitz-projective-plane-general}
\int 
\tau_1^{\rm BKP}\left(X\right)
\prod_{i=1}^{\textsc{f}-1}
e^{\tr\, V\left(M_i,\bpow^{(i)}\right)} d\mu\left(M_i,M_i^\dag\right) \,=
\ee
\be
\int 
\tau_1^{\rm 2KP}\left(X ,\bpow^{(\textsc{f}-1)}\right)
\tau_1^{\rm BKP}\left( M_{\textsc{f}-2} \right)
  d\mu\left(M_{\textsc{f}-1},M_{\textsc{f}-1}^\dag\right)  \prod_{i=1}^{\textsc{f}-2}
e^{\tr\, V\left(M_i,\bpow^{(i)}\right)} d\mu\left(M_i,M_i^\dag\right)
\ee
where
\[
 X=\Lambda M^\dag_1\cdots M^\dag_{\textsc{f}-2}
\]
and
\[
 \tau_1^{\rm BKP}(X)\,:=\,\sum_\lambda \,s_\lambda(X)\,=\,\prod_{N>i>j}\,(1-x_ix_j)^{-1}\,\prod_{i=1}^N\,(1-x_i)^{-1}
\]

Obtained relations allows 
to change $\textsc{f}$ keeping the Euler characteristic $\textsc{e}$
Now, we want to change  $\textsc{e}$. This is even easier and may be done in a straightforward way

\be
\tau^{\textsc{e}-2,\textsc{f}-2}
\left(N,\bpow^{(1)},\dots ,\bpow^{(\textsc{f}-2)}\right)=\int_{\mathbb{U}(N)}\,
\tau^{\textsc{e},\textsc{f}}
\left(N,\bpow^{(1)},\dots ,\bpow^{(\textsc{f}-2)},U,U^\dag\right)\,d_*U
\ee

Now, we can construct generating functions
for $H^\textsc{e}_{d,N}$ where for even $\textsc{e}$ we start from a Toda lattice tau function and
for $\textsc{e}$ odd we start from a BKP tau function.

\[
 d\Omega(Z,Z^\dag)=\,e^{-\tr Z_iZ_i^\dag} d^2Z_i
\]

\[
 \,\sum_{\lambda\atop \ell(\lambda)\le N}\,\left(s_\lambda(\bpow_\infty)\right)^{2-2\textsc{g}}
 \,\frac{s_\lambda(\bpow)}{s_\lambda(\bpow_\infty)}\,
 \prod_{i=1}^{\textsc{f}}\,\frac{s_\lambda(\Lambda_i)}{s_\lambda(\bpow_\infty)}=
\]
\be
\int \, \tau_1^{\rm 2KP}\left( 
X_{\textsc{f}} Y_{\textsc{g}},\bpow \right) \, \prod_{i=1}^{\textsc{f}}\,d\Omega(Z_i,Z_i^\dag)
\prod_{i=1}^{\textsc{g}}\, d_* U_i\,\prod_{i=1}^{2\textsc{g}}  d\Omega(C_i,C_i^\dag)\,=
\ee
\be
\int \, \tau_1^{\rm 2KP}\left( 
{\tilde X}_{\textsc{f}} {\tilde Y}_{\textsc{g}},\bpow \right) \, \prod_{i=1}^{\textsc{f}}\, d_*U_i
\prod_{i=1}^{\textsc{g}}\, d\Omega(Z
_i,Z_i^\dag)\,\prod_{i=1}^{2\textsc{g}} \prod_{i=1}^{\textsc{f}}\,d_*W_iDDXSSWSW
\ee

 and
\be
\int \, \tau_1^{\rm BKP}\left( 
X_{\textsc{f}} Y_{\textsc{g}} \right) \, \prod_{i=1}^{\textsc{f}}\,e^{-\tr Z_iZ_i^\dag} d^2Z_i
\prod_{i=1}^{\textsc{g}}\, d_* U_i\,\prod_{i=1}^{2\textsc{g}} e^{-\tr C_{i}C_{i}^\dag} d^2C_{i}\,=
\ee
\[
 =\,\sum_{\lambda\atop \ell(\lambda)\le N}\,\left(s_\lambda(\bpow_\infty)\right)^{1-2\textsc{g}}
 \prod_{i=1}^{\textsc{f}}\,\frac{s_\lambda(\Lambda_i)}{s_\lambda(\bpow_\infty)}
\]
where
\[
 X_\textsc{f} =\Lambda_{\textsc{f}}\left(Z_{\textsc{f}-1}\Lambda_{\textsc{f}-1} Z_{\textsc{f}-1}^\dag \cdots 
Z_{1}\Lambda_{1} Z_{1}^\dag \right)
\]
\[
Y_\textsc{g}=\left(C_{2\textsc{g}} U_{\textsc{g}} C_{2\textsc{g}}^\dag 
C_{2\textsc{g}-1} U_{\textsc{g}}^\dag C_{2\textsc{g}-1}^\dag \right) \cdots 
\left(C_{2} U_1 C_{2}^\dag C_{1} U_1^\dag C_{1}^\dag \right)
\]
and $Z_i$ and $C_i$ are complex matrices.

\be
\int \, \tau_1^{\rm 2KP}\left( 
X_{\textsc{f}} Y_{\textsc{g}},\bpow \right) \, \prod_{i=1}^{\textsc{f}}\,e^{-\tr Z_iZ_i^\dag} d^2Z_i
\prod_{i=1}^{\textsc{g}}\, d_* U_i\,\prod_{i=1}^{2\textsc{g}} e^{-\tr C_{i}C_{i}^\dag} d^2C_{i}\,=
\ee
\[
 =\,\sum_{\lambda\atop \ell(\lambda)\le N}\,\left(s_\lambda(\bpow_\infty)\right)^{2-2\textsc{g}}
 \,\frac{s_\lambda(\bpow)}{s_\lambda(\bpow_\infty)}\,
 \prod_{i=1}^{\textsc{f}}\,\frac{s_\lambda(\Lambda_i)}{s_\lambda(\bpow_\infty)}
 \]
 and
\be
\int \, \tau_1^{\rm BKP}\left( 
X_{\textsc{f}} Y_{\textsc{g}} \right) \, \prod_{i=1}^{\textsc{f}}\,d\Omega^a(M_i,M_i^*)
\prod_{i=1}^{\textsc{g}}\, d_* U_i\,\prod_{i=1}^{2\textsc{g}} d\Omega^\texttt{C}(Z,Z^\dag)\,=
\ee
\[
 =\,\sum_{\lambda\atop \ell(\lambda)\le N}\,\left(s_\lambda(\bpow_\infty)\right)^{1-2\textsc{g}}
 \prod_{i=1}^{\textsc{f}}\,\frac{s_\lambda(\Lambda_i)}{s_\lambda(\bpow_\infty)}
\]
where
\[
 X_\textsc{f} =\Lambda_{\textsc{f}}\left(U_{\textsc{f}-1}M_{\textsc{f}-1} U_{\textsc{f}-1}^\dag \cdots 
U_{1}M_{1} U_{1}^\dag \right)\,,\quad
Y_\textsc{g}=\left(Z_{2\textsc{g}} U_{\textsc{g}} Z_{2\textsc{g}}^\dag 
Z_{2\textsc{g}-1} U_{\textsc{g}}^\dag Z_{2\textsc{g}-1}^\dag \right) \cdots 
\left(Z_{2} U_1 Z_{2}^\dag Z_{1} U_1^\dag Z_{1}^\dag \right)
\]
and $M_i,M_i^*$ may be conjugated complex matrices, conjugated normal ones, or a pair of two nonrelated
matrices, one being Hermitian matrix while the other is an anti-Hermitian one,
$U_i\in\mathbb{U}$ and $Z_i$ are complex matrices.

Then we use the known formula (for instance see \cite{Mac})
\be\label{int-Schurs}
 \int_{\mathbb{U}(N)} s_\lambda(AUBU^\dag) dU = \frac{ s_\lambda(A) s_\lambda(B)}{s_\lambda(I_N)}
\ee
where $I_N$ is the unit matrix (see for instance \cite{Mac}) and Cauchy-Littlewood formula
\be
 e^{\sum_{m>0}\frac 1m p_mp_m^*}=\sum_\lambda s_\lambda(\bpow)s_\lambda(\bpow^*)
\ee
where
\be\label{p^*_m}
  p^*_m =  \tr (AUBU^\dag)^m
\ee
like it was done in \cite{O-2002}.
Writing the product of $K$ matrices as $A_1 B_1$ where $B_1=A_2\cdots A_K$ diagonalizing, then repeating $K-1$ times
we obtain
\be
I={\rm Vol}{\mathbb{U}}(N)\sum_{\lambda \atop \ell(\lambda)\le N}
\frac{s_\lambda(\bpow)\prod_{i=1}^K s_\lambda(A_i)}{(s_\lambda(I_N))^{K-1}}
\ee
which may be related to more complicated Hurwitz numbers with $K+1$ arbitrary profiles, namely
to the sums $S_{\mathbb{CP}^1}(d|b|l_1,\dots,l_k|l_1^*,\dots,l^*_s|
\Delta^{(1)},\dots,\Delta^{(K+1)})$.

Similarly, we can integrate hypergeometric $\tau_r^{TL}(\bpow,A_1\cdots A_K)$ and $\tau_r^{BKP}A_1\cdots A_K)$
(instead of the simplest
TL tau function given by Itsykson-Zuber $e^{\tr AUBU^\dag}$.

We obtain

The generating function
$S_{\mathbb{CP}^1}(d|b|l_1,\dots,l_k|l_1^*,\dots,l^*_s|
\Delta^{(1)},\dots,\Delta^{(K+1)})$ is constructed as the following matrix integral
\be\label{integral-1}
\int_{\mathbb{U}(N)\times \cdots \times \mathbb{U}(N)}\,
\tau_r^{\rm TL}(n,\bpow,U_1A_1U_1^\dag U_2A_2U_2^\dag \cdots U_KA_K U_K^\dag)\,\prod_{i=1}^K \,dU_i=
\ee
\be
\sum\limits_B\,(qe^{\beta n})^d\frac{\beta^b}{b!}\,
\frac{\prod_{i=1}^k \,(a_i+n)^{l_i}}{\prod_{i=1}^s \,b_i^d\,(b_i+n)^{l^*_i} }\,
S_{\mathbb{CP}^1}^N(B)\, \prod_{i=1}^{K+1}\,{\bpow}^{(i)}_{\Delta^{(i)}}
\ee

where the sum is taken by all
$B=(d|l_1,\dots,l_k|l_1^*,\dots,l^*_s| \Delta^{(1)},\cdots,\Delta^{(K+1)} )$
and where $\bpow^{(i)} = \left( p_1^{(i)},p_2^{(i)},\dots \right)$ and
$$
 p_m^{(i)}=\tr A_i^m\,,\quad i=1,\dots, K,\qquad p_m^{(K+1)}=p_m
$$

 From (\ref{int-Schurs})-(\ref{p^*_m}) we obtain
 \be\label{integrals-Schurs-TL}
\int_{\mathbb{U}(N)\times \cdots \times \mathbb{U}(N)}
\tau_r^{\rm TL}(\bpow,U_1A_1U_1^\dag U_2A_2U_2^\dag \cdots U_KA_K U_K^\dag)\prod_{i=1}^K dU_i =
\sum_{\lambda\atop \ell(\lambda)\le N}r_\lambda(n)\frac{s_\lambda(\bpow)
\prod_{i=1}^K s_\lambda(A_i)}{\left(s_\lambda(I_N) \right)^{K-1}}
\ee
 Then we apply the same steps as the Theorem \ref{Th-BKP}.

Construct now  generating function for $S^N_{\mathbb{RP}^2}(d|l_1,\dots,l_k|l_1^*,\dots,l^*_s|
\Delta^{(1)},\dots,\Delta^{(K)})$. Put
\be\label{integrals-of-BKP-generating-Hurwitz-sums}
\int_{\mathbb{U}(N)\times \cdots \times \mathbb{U}(N)}\,
\tau_r^{\rm BKP}(n,U_1A_1U_1^\dag U_2A_2U_2^\dag \cdots U_KA_K U_K^\dag)\,\prod_{i=1}^K \,dU_i\, =
\ee
\be
\sum\limits_B \,(qe^{\beta n})^d\frac{\beta^b}{b!}  \,\frac{\prod_{i=1}^s
\,b_i^{-d}\,(b_i+n)^{l^*_i} }{\prod_{i=1}^k a_i^{-d}(a_i+n)^{l_i}}\,
S^N_{\mathbb{RP}^2}(B)\, \prod_{i=1}^K \,{\bpow}^{(i)}_{\Delta}
\ee
where the sum is taken by all $B=(d|l_1,\dots,l_k|l_1^*,\dots,l^*_s|\Delta^{(1)},\dots,\Delta^{(K)})$
and where $\bpow _i = \left( p_1^{(i)}, p_2^{(i)},\dots \right)$ and
\[
 p_m^{(i)}=\tr A_i^m\,,\quad i=1,\dots, K
\]

 From (\ref{int-Schurs})-(\ref{p^*_m}) we obtain
 \be\label{integrals-Schurs-BKP}
\int_{\mathbb{U}(N)\times \cdots \times \mathbb{U}(N)}
\tau_r^{\rm BKP}(U_1A_1U_1^\dag U_2A_2U_2^\dag \cdots U_KA_K U_K^\dag)\,\prod_{i=1}^K dU_i =
\sum_{\lambda\atop \ell(\lambda)\le N}r_\lambda(n)\frac{\prod_{i=1}^K s_\lambda(A_i)}{\left(s_\lambda(I_N) \right)^{K-1}}
\ee
 Then we apply the same steps as the Theorem \ref{Th-BKP}.

\end{document}